\def\>{\rangle} \def\<{\langle}
\renewcommand{\>}{\rangle}
\newcommand{\Div}{\ensuremath{{\sf C}^\text{div}}}
\newcommand{\pDiv}{\ensuremath{{\sf C}^\text{P}}}
\newcommand{\cpDiv}{\ensuremath{{\sf C}^\text{CP}}}
\newcommand{\LDiv}{\ensuremath{{\sf C}^\text{L}}}
\newcommand{\Ind}{\ensuremath{\overline{{\sf C}^{\rm div}}}}
\newcommand{\InftyDiv}{\ensuremath{{\sf C}^\infty}}
\newcommand{\InfDiv}{\ensuremath{{\sf C}^\text{Inf}}}
\newcommand{\DivFunc}{\delta}
\newcommand{\Jami}{Choi-Jamio\l{}kowski}
\newcommand{\one}{\openone{}}
\newtheorem{theorem}{Theorem}
\newcommand{\unam}{Universidad Nacional Aut\'onoma de M\'exico, M\'exico, D.F., M\'exico}
\newcommand{\ifunam}{Instituto de F\'{\i}sica, \unam}
\newcommand{\checo}{Faculty of Informatics, Masaryk University, Botanick\'a 68a, 60200 Brno, Czech Republic}
\newcommand{\sas}{Institute of Physics, Slovak Academy of Sciences, D\'ubravsk\'a cesta 9, Bratislava 84511, Slovakia}
\newcommand{\vienna}{Faculty of Physics, University of Vienna, 1090 Vienna, Austria}
\newcommand{\mcU}{\mathcal{U}}
\newcommand{\mcA}{\mathcal{A}}
\newcommand{\mcH}{\mathcal{H}}
\newcommand{\mcN}{\mathcal{N}}
\newcommand{\mcE}{\mathcal{E}}
\newcommand{\mcB}{\mathcal{B}}
\newcommand{\mcF}{\mathcal{F}}
\newcommand{\mcT}{\mathcal{T}}
\newcommand{\mcD}{\mathcal{D}}
\newcommand{\diag}{\text{diag}}
\newcommand{\ie}{i.e.}
\newcommand{\eref}[1]{Eq.~(\ref{#1})} 
\newcommand{\sref}[1]{sec.~\ref{#1}}
\newcommand{\fref}[1]{Fig.~\ref{#1}}
\newcommand{\tr}{\mathop{\mathrm{tr}}\nolimits}
\newcommand{\ket}[1]{{\vert #1 \rangle}}
\newcommand{\bra}[1]{{\langle #1 \vert}}
\newcommand{\id}{\text{id}}
\begin{document}
\title{Divisibility of qubit channels and dynamical maps}
\author{David Davalos} \affiliation{\ifunam}
\email{davidphysdavalos@gmail.com}
\author{Mario Ziman} \affiliation{\sas}\affiliation{\checo}
\author{Carlos Pineda} \affiliation{\ifunam}\affiliation{\vienna}

\begin{abstract} 
The concept of divisibility of dynamical maps is used to introduce an analogous
concept for quantum channels by analyzing the \textit{simulability} of channels
by means of dynamical maps. In particular, this is addressed for
Lindblad divisible, completely positive divisible and positive divisible
dynamical maps. The corresponding L-divisible, CP-divisible and P-divisible
subsets of channels are characterized (exploiting the results by Wolf et al.
\cite{cirac}) and visualized for the case of qubit channels. We discuss the
general inclusions among divisibility sets and show several equivalences for qubit
channels. To this end we study the conditions of L-divisibility for finite dimensional channels,
especially the cases with negative eigenvalues, extending and completing the results of Ref.~\cite{Wolf2008}. Furthermore we show
that transitions between every two of the defined divisibility sets are allowed. We
explore particular examples of dynamical maps to compare these concepts.
Finally, we show that every divisible but not infinitesimal divisible qubit
channel (in positive maps) is entanglement breaking, and open the question if
something similar occurs for higher dimensions.  
 
\end{abstract} 
%
%
\maketitle
\section{Introduction} 
The advent of quantum technologies opens questions aiming for deeper understanding of the fundamental physics beyond the idealized case of isolated quantum systems. Also the well established Born-Markov approximation used to describe open quantum systems (e.g. relaxation process such as spontaneous decay) is of limited use and a more general framework of open system dynamics is demanded. Recent efforts in this area have given rise to relatively novel research subjects - non-markovianity and divisibility.

Non-markovianity is a characteristic of continuous time evolutions of quantum
systems (quantum dynamical maps), whereas divisibility refers to properties
of system's transformations (discrete quantum processes) over a fixed time
interval (quantum channels). The non-markovianity aims to capture and describe
the back-action of the system's environment on the system's future time
evolution. Such phenomena is identified as emergence
of memory effects~\cite{rivasreview,breuerreview,ourmeasure}.
On the other side, the divisibility questions the possibility of splitting
a given quantum channel into a concatenation of other quantum channels.
In this work we will investigate the relation between these two notions.

Our goal is to understand the possible forms of the dynamics
standing behind the observed quantum channels, specially in regard to their
divisibility properties which in turn determine their
markovian or non-markovian nature.
In particular, we provide characterization of the
subsets of qubit channels depending on their divisibility properties and
implementation by means of dynamical maps. An attempt to characterize
the set of channels belonging to one-parameter semigroups
induced by (time-independent) Lindblad master equations has been
already done in Ref.~\cite{Wolf2008}. However, it has drawbacks
when dealing with channels with negative eigenvalues.
Using the results of Ref.~\cite{Evans1977} and Ref.~\cite{Culver1966},
we will extend the analysis of~\cite{Wolf2008} also for channels
with negative eigenvalues.


The paper is organized as follows: In section~\ref{sec:divisibility} we give the formal definition of quantum channels and of quantum dynamical maps, and some of their properties. We discuss the meaning of divisibility for each object and discuss the known inclusions and equivalences between divisibility types.  In section~\ref{sec:qubitchan} we discuss properties and representations of qubit channels and their divisibility. We introduce a useful theorem to decide L-divisibility, which is in turn valid for any finite dimension. In section~\ref{sec:jumps} we discuss the possible transition that can be occur between divisibility types, and show two examples of dynamical maps and their transitions. Finally in Section~\ref{sec:conclusions} we summarize our results and discuss open questions.

\section{Basic definitions and divisibility} 
\label{sec:divisibility}
\subsection{Channels and divisibility classes} 

We shall study transformations of a physical system associated with a complex
Hilbert space $\mcH_d$ of dimension $d$. In particular, we consider linear maps
on bounded operators, $\mcB(\mcH_d)$, that for the finite-dimensional case coincides
with the set of trace-class operators that accommodate the subset of density
operators representing the quantum states of the system. We say a linear map
$\mcE:\mcB(\mcH_d)\to\mcB(\mcH_d)$ is \emph{positive}, if it maps positive
operators into positive operators, i.e. $X\geq 0$ implies $\mcE[X]\geq 0$.
\emph{Quantum channels} are associated with elements of the convex set ${\sf
C}$ of \emph{completely positive trace-preserving linear maps} (CPTP)
transforming density matrices into density matrices, i.e.
$\mcE:\mcB(\mcH)\to\mcB(\mcH)$ such that $\tr(\mcE[X])=\tr(X)$ for all
$X\in\mcB(\mcH)$, and all its extensions $\id_n\otimes\mcE$ are positive maps
for all $n>1$, where $\id_n$ is the identity channel on a $n$-dimensional quantum
system. 
In general a channel has the form $\mcE[X]=\sum_i K_i X K_i^{\dagger}$. The
minimum number of operators $K_i$ required in the previous expression is called
the \textit{Kraus rank} of $\mcE$.

Let us introduce two subsets of channels. First, we say a channel is
\emph{unital} if it preserves the identity operator, i.e. $\mcE[\one]= \one$.
Unital channels have a simple parametrization which will be useful for our
purposes.  Second, if $\mcE[X]=UXU^\dagger$ for some \emph{unitary operator}
$U$ (meaning $UU^\dagger=U^\dagger U=\one$), we say the channel is
\emph{unitary}. 

A quantum channel $\mcE$ is called \emph{indivisible} if it cannot be written
as a concatenation of two non-unitary channels, namely, if $\mcE =
\mcE_1\mcE_2$ implies that either $\mcE_1$, or $\mcE_2$, exclusively, is a
unitary channel. If the channel is not indivisible, it is said to be
\emph{divisible}.  We denote the set of divisible channels by ${\sf C}^{\rm
div}$ and that of indivisible channels by \Ind{}.
Following this definition, unitary channels are divisible, because for them
both (decomposing) channels $\mcE_{1,2}$ must be unitary. 
The concept of indivisible channels resembles the concept of prime numbers:
unitary channels play the role of unity (which are not indivisible/prime), i.e.
a composition of indivisible and a unitary channel results in an indivisible
channel.

We now define the set of \emph{infinitely divisible} channels (\InftyDiv) and
the set of \emph{infinitesimal divisible} channels (\InfDiv).  Infinitely
divisible channels, in some sense opposite to indivisible channels, are defined
as channels $\mcE$ for which there exist for all $n=1,2,3,\dots$ a channel
${\cal A}_n$ such that $\mcE=({\cal A}_n)^n$.  Now, consider channels $\mcE$
that may be written as products of channels close to identity, i.e. such that
for all $\epsilon>0$ there exists a finite set of channels $\varepsilon_j$ with
$||\id-\varepsilon_j||\leq \epsilon$ and $\mcE=\prod_j \varepsilon_j$. Its
closure determines the set of infinitesimal divisible channels \InfDiv.
\subsection{Quantum dynamical maps and more divisibility classes} 

The next sets of channels are going to be defined using three types of
dynamical maps.  A \emph{quantum dynamical map} is identified with a continuous
parametrized curve drawn inside the set of channels starting at the identity
channel, i.e. a one-parametric function $t\mapsto\mcE_t\in{\sf C}$ for all $t$
belonging to an interval with minimum element $0$ and satisfying the initial
condition $\mcE_0=\id$. Let $\mcE_{t,s}=\mcE_{t}^{-1}\mcE_s$ be the
linear map describing the state transformations within the time interval $[t,s]$, whenever $\mcE_{t}^{-1}$ exists.

\begin{itemize}
  \item
A given quantum dynamical map is called
\emph{CP-divisible} if for all $t<s$ the map $\mcE_{t,s}$ is a channel.
\item A given quantum dynamical map is called
  \emph{P-divisible}~\cite{breuerreview} if
$\mcE_{t,s}$ is a positive trace-preserving linear map for all $t<s$.
\item A given quantum dynamical map is called \emph{L-divisible} if
  it is induced by a time-independent Lindblad master equation~\cite{lindblad, kossa, Gorini1976}, i.e. $\mcE_t=e^{tL}$ with
\begin{equation*}
L(\rho)= i [\rho,H]
 +\sum_{\alpha,\beta} G_{\alpha \beta} 
     \left( 
         F_{\alpha}\rho F^{\dagger}_{\beta}
             -\frac{1}{2} \lbrace F^{\dagger}_{\beta} F_{\alpha},\rho \rbrace 
     \right),
\end{equation*}
where $H=H^\dagger\in\mcB(\mcH_d)$ is known as Hamiltonian,
$\{F_{\alpha}\}$ are hermitian and form an orthonormal basis of the operator
space $\mcB(\mcH_d)$, and $G_{\alpha\beta}$ constitutes a hermitian positive
semi-definite matrix.
\end{itemize}

If we allow the Lindblad generator $L$ to depend on
time, we recover the set of CP-divisible quantum dynamical maps as the
resulting dynamical maps $\mcE_t={\hat{\rm T}} e^{\int_0^t L(\tau)d\tau}$
($\hat{\rm T}$ denotes the time-ordering operator) are compositions of
infinitesimal completely-positive
maps~\cite{kossa,Gorini1976,lindblad,rivasreview}.
Notice that there is a hierarchy for quantum dynamical maps:
\emph{L-divisible} quantum dynamical maps are \emph{CP-divisible}
which in turn are \emph{P-divisible}.  

Using the introduced families of quantum dynamical maps we can now
classify quantum channels according to whether they can
be implemented by the aforementioned kinds of quantum dynamical maps.
We define subsets \LDiv{}, \cpDiv{}, and \pDiv{} of
\emph{L-divisible}, \emph{CP-divisible} and \emph{P-divisible} channels,
respectively.  In particular, we say $\mcE\in\LDiv$ if it belongs to
the closure of a L-divisible quantum dynamical map.  Let us stress that the
requirement of the existence of Lindblad generator $L$ such that $\mcE=e^{L}$
is not sufficient and closure is necessary.  For example, the evolution
governed by $L(\rho)=i[\rho, H]+\gamma [H,[H,\varrho]]$
results \cite{Ziman2005a} in the diagonalization of states
in the energy eigenbasis of $H$. Such transformation
$\mcE_{\rm diag}$ is not invertible, thus (by definition) $L=\log\mcE_{\rm
diag}$ does not exist (contains infinities).  Analogously, we say 
$\mcE\in\cpDiv$ ($\mcE\in\pDiv$) if there exists a CP-divisible
(P-divisible) dynamical map $\mcE_t$ such that $\mcE=\mcE_t$ (with arbitrary
precision) for some value of $t$ (including $t=\infty$). 

We now recall how to verify whether a channel is L-divisible 
since we will build upon the method for some of our results. 
Verifying whether $\mcE \in \LDiv$ \cite{Wolf2008} requires evaluation of
the channel's logarithms, however, the matrix logarithm is defined
only for invertible matrices and it is not unique. In fact, we need to
check if at least one of its branches has the Lindblad form.
It was shown in \cite{Evans1977} that $\mcE$ is L-divisible if and only
if there exists $L$ such that: $\exp L=\mcE$, is hermitian preserving,
trace-preserving and conditionally completely positive (ccp). Thus, we are
looking for logarithm satisfying $L(X^\dagger)=L(X)^\dagger$ (hermiticity
preserving), $L^*(\one)=0$ (trace-preserving), and 
\begin{equation}
(\one-\omega)({\rm id}\otimes
L)[\omega](\one-\omega)\geq 0
\label{eq:ccp_general}
\end{equation} (ccp condition), where
$\omega=\frac{1}{d}\sum_{j,k=1}^d\ket{j\otimes j}\bra{k\otimes k}$ is the
projector onto a maximally entangled state.
To the best of our knowledge, there is no general method to verify if a
channel is P or CP divisible, with some exceptions~\cite{cirac}.

\subsection{Relation between channel divisibility classes} 
\label{subsec:relations}
Due to the inclusion set relations between the three kinds of dynamical maps
discussed in the previous sections, we can see that 
$\LDiv\subset\cpDiv\subset\pDiv$. Similarly, from the earlier definitions, 
one can easily argue that $\InftyDiv\subset\InfDiv\subset\Div$.
By the definition of L-divisibility it is trivial to see that in
general $\LDiv{}\subset\InftyDiv$. Indeed Denisov has shown in
\cite{Denisov1989} that infinitely divisible channels can be
written as $\mcE=\mcE_0 e^{L}$, 
%
with $L$ a Lindblad  generator, and $\mcE_0$ idempotent operator such that 
$\mcE_0 L \mcE_0=\mcE_0 L$. 
Further, it was shown in Ref. \cite{cirac} that
$\mcE\in{\sf C}^{\rm inf}$ implies $\det\mcE\geq 0$ and also that $\mcE$ can be
approximated by $\prod_j e^{L_j}$, i.e. ${\sf C}^{\rm CP}={\sf C}^{\rm inf}$.
In other words, the positivity of determinant is necessary for the channel to
be (in the closure of) channels attainable by CP-divisible dynamical maps.  In
summary, we have the following relations between sets (see also
\fref{fig:setscheme}):
\begin{equation}
  \label{eq:set_relations}
\begin{array}{cccccc}
\InftyDiv{} & \subset & \InfDiv{} & \subset &\Div{} & \\ 
\rotatebox{90}{$\subseteq$} &  & \rotatebox{90}{=} &  &   \\ 
\LDiv{} & \subset & \cpDiv{} & \subset & \pDiv{} 
\end{array} .
\end{equation}
The relation between \pDiv{}
and \Div{} is unknown, although it is clear that $\Div{} \subset \pDiv{}$ is not possible since channels in \pDiv{} are not necessarily divisible in CP maps. The intersection of \pDiv{} and \Div{} is not empty since $\cpDiv{}\subseteq \Div{}$ and $\cpDiv{} \subseteq \pDiv{}$. Later on we will investigate if $\pDiv{}\subseteq \Div{}$ or not.

\begin{figure} 
\centering
\includegraphics{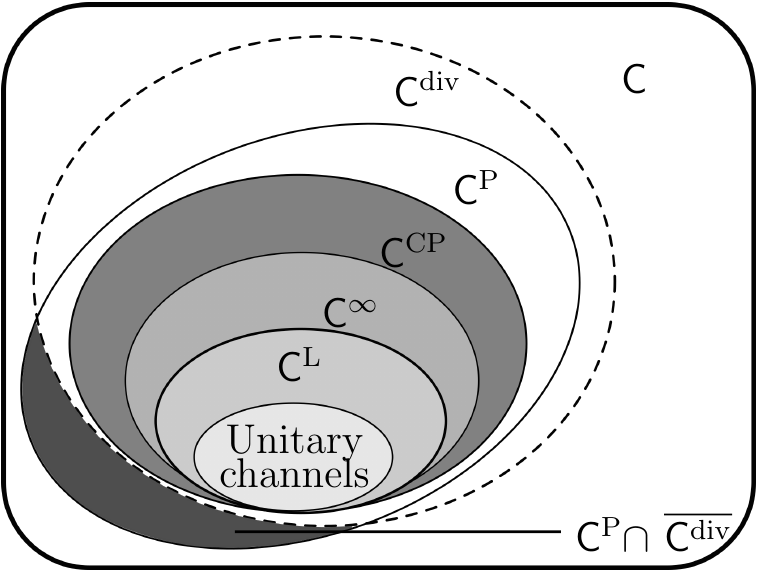}
\caption{
Scheme illustrating the different sets of quantum channels for a given dimension, 
discussed in \sref{sec:divisibility}. In particular, the inclusion relations
presented in \eref{eq:set_relations} are depicted.
\label{fig:setscheme}}
\end{figure} 

\section{Qubit channels} 
\label{sec:qubitchan}
\subsection{Representations} 
\label{subsec:representations}
Using the Pauli basis $\frac{1}{\sqrt{2}}\lbrace
\mathbf{1},\sigma_x,\sigma_y,\sigma_z\rbrace$, and the standard Hilbert-Schmidt
inner product, the real representation for qubit channels is given by
\cite{ruskai,zimansbook}:
\begin{equation}
\hat \mcE=\left(\begin{array}{cc}
1 & \vec 0^{T} \\ 
\vec t & \Delta
\end{array} \right).
\end{equation}
This describes the action of the channel in the Bloch sphere picture in which
the points $\vec{r}$ are identified with density operators
$\varrho_{\vec{r}}=\frac{1}{2}(I+\vec{r}\cdot\vec{\sigma})$. We will write
$\mcE=(\Delta,\vec{t})$ meaning that $\mcE(\rho_{\vec{r}})
=\rho_{\Delta\vec{r}+\vec{t}}$.

In order to study qubit channels with simpler expressions, we will consider a decomposition in unitaries such that
\begin{equation}
\mcE=\mcU_1 \mcD \mcU_2.
\label{eq:EUDU}
\end{equation}
This can be performed by decomposing $\Delta$ in rotation matrices,
\ie{} $\Delta=R_1 D R_2$, where $D={\rm diag}(\lambda_1,\lambda_2,\lambda_3)$
is diagonal and the rotations $R_{1,2}\in\text{SO}(3) $ (of
the Bloch sphere) correspond to the unitary channels $\mcU_{1,2}$. This
decomposition should not be confused with the singular value
decomposition. The latter allows decompositions that include, say, total
reflections. Such operations do not correspond to unitaries over a qubit,
in fact they are not CPTP. 
Therefore the channel $\mcD$, in the Pauli basis, is given by
\begin{equation}
\hat \mcD=\left( \begin{array}{cc}
1 & \vec 0^{T} \\ 
\vec{\tau} & D
\end{array} \right)\,,
\label{eq:orthogonalform}
\end{equation}
where $\Delta=R_1 DR_2$ and $\vec{\tau}=R_1^T\vec{t}$. The latter describes the
shift of the center of the Bloch sphere under the action of $\mcD$.
The parameters $\vec{\lambda}$ determine the length of semi-axes
of the Bloch ellipsoid, being the deformation of Bloch sphere under
the action of $\mcE$. From now we will call the form $\mcD$, \textit{special
orthogonal normal form}.


We shall develop a geometric intuition in the space determined by the possible
values of these three parameters. For an arbitrary channel, complete positivity
implies that the possible set of lambdas lives inside the tetrahedron with
corners $(1,1,1)$, $(1,-1,-1)$, $(-1,1,-1)$ and $(-1,-1,1)$, see
\fref{fig:tetra}.  For unital channels, all points in the tetrahedron are
allowed, but for non-unital channels more restrictive conditions arise. In
\fref{fig:cutnonunital2} we present a visualization of the permitted values of
$\vec \lambda$ for a particular nontrivial value of $\vec \tau$, and in
\cite{geometry2017} the steps to study the general case from an algebraic point of
view are presented.  For the unital case, the corner $\vec \lambda = (1,1,1)$
corresponds to the identity channel, $\vec \lambda = (1,-1,-1)$ to $\sigma_x$
$\vec \lambda = (-1,1,-1)$ to $\sigma_y$  and $\vec \lambda = (-1,-1,1)$ to
$\sigma_z$ (Kraus rank 1 operations). Points in the edges correspond to Kraus
rank 2 operations, points in the faces to Kraus rank 3 operations and in the
interior of the tetrahedron to Kraus rank 4 operations.

In addition to this decomposition, following the definition of divisibility, concatenation with unitaries of a given quantum channel do not change the divisibility character of the latter. Thus, orthogonal normal forms are useful to study divisibility since, following also the properties of \pDiv{} and \cpDiv{} introduced in \sref{subsec:representations}, immediately one has:
\begin{theorem}[\textbf{Divisibility of
special orthogonal normal forms}]
Let $\mcE$ a qubit quantum channel and $\mcD$ its special orthogonal normal
form, $\mcE$ belongs to ${\sf C}^X$ if and only if $\mcD$ does, where $X=\lbrace
\text{``Div'', ``P'', ``CP''} \rbrace$. 
\label{thm:divisibility_using_orthogonal_form}
\end{theorem}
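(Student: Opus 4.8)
The plan is to isolate one structural fact together with two elementary observations and then read off all three cases at once. The structural fact is that the special orthogonal normal form differs from the channel only by unitary channels on the two sides, $\mcE=\mcU_1\mcD\,\mcU_2$, equivalently $\mcD=\mcU_1^{-1}\mcE\,\mcU_2^{-1}$, where $\mcU_1^{-1},\mcU_2^{-1}$ are again unitary channels because the unitary channels form a group under composition. The first observation is that for any unitary channel $\mcU$ and any channel $\mcG$ the composite $\mcU\mcG$ is a unitary channel if and only if $\mcG$ is (because $\mcG=\mcU^{-1}(\mcU\mcG)$), and similarly for composition on the right. The second observation is that every unitary channel is L-divisible: if $U=e^{-iH}$ with $H=H^\dagger$ then $\mcU_t[\rho]=e^{-itH}\rho\,e^{itH}=e^{tL}[\rho]$ with $L(\rho)=i[\rho,H]$ is an L-divisible dynamical map joining $\mcU_0=\id$ to $\mcU_1=\mcU$; hence unitary channels lie in $\LDiv\subset\cpDiv\subset\pDiv$, and (by the paper's convention) also in $\Div$. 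Because each of the three properties is symmetric under $\mcE\mapsto\mcU_1^{-1}\mcE\,\mcU_2^{-1}$, it suffices to prove the implication $\mcE\in{\sf C}^X\Rightarrow\mcU\mcE\,\mcU'\in{\sf C}^X$ for arbitrary unitary channels $\mcU,\mcU'$; the theorem then follows by applying this once to $\mcE$ with $(\mcU,\mcU')=(\mcU_1^{-1},\mcU_2^{-1})$ and once to $\mcD$ with $(\mcU,\mcU')=(\mcU_1,\mcU_2)$.

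For $X=\text{``Div''}$ the argument is purely combinatorial. The assignment $(\mcE_1,\mcE_2)\mapsto(\mcU\mcE_1,\,\mcE_2\,\mcU')$ is a bijection from factorisations $\mcE=\mcE_1\mcE_2$ into channels onto factorisations $\mcU\mcE\,\mcU'=\mcF_1\mcF_2$ into channels, with inverse $(\mcF_1,\mcF_2)\mapsto(\mcU^{-1}\mcF_1,\,\mcF_2\,(\mcU')^{-1})$, and by the first observation it preserves the unitarity status of each factor. Hence ``every factorisation contains exactly one unitary factor'' holds for $\mcE$ iff it holds for $\mcU\mcE\,\mcU'$, i.e.\ $\mcE$ is indivisible iff $\mcU\mcE\,\mcU'$ is, and passing to complements gives the equivalence for divisibility. (This also classifies unitary channels correctly, since the bijection sends unitary channels to unitary channels.)

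For $X\in\{\text{``P''},\text{``CP''}\}$ I would transport a witnessing dynamical map by concatenation. Let $\mcE_t$, $t\in[0,T]$, be an $X$-divisible dynamical map with $\mcE_T=\mcE$, and, using the second observation, choose unitary dynamical maps $\mcN_t,\mcM_t$ on $[0,1]$ with $\mcN_0=\mcM_0=\id$, $\mcN_1=\mcU'$, $\mcM_1=\mcU$. Define $\mcF$ on $[0,3]$ by $\mcF_t=\mcN_t$ for $t\in[0,1]$, $\mcF_t=\mcE_{(t-1)T}\,\mcU'$ for $t\in[1,2]$, and $\mcF_t=\mcM_{t-2}\,\mcE\,\mcU'$ for $t\in[2,3]$. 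This curve is continuous, stays in the set of channels, and satisfies $\mcF_0=\id$, $\mcF_3=\mcU\mcE\,\mcU'$. It remains to check that $\mcF$ is $X$-divisible, namely that each transition map $\mcF_{t,s}$ with $t<s$ is positive and trace preserving (for $X=\text{``P''}$) resp.\ a channel (for $X=\text{``CP''}$). A direct computation shows that for $t,s$ in the same stage $\mcF_{t,s}$ is a composition of unitary channels, a transition map of $\mcE_t$, or such a transition map conjugated by a unitary channel; and for $t,s$ in different stages it is a transition map of $\mcE_t$ (or $\mcE$ itself) composed on one side with a unitary channel. In all cases $\mcF_{t,s}$ is positive-and-trace-preserving resp.\ completely-positive-and-trace-preserving, because these classes are closed under composition and under one- and two-sided composition with unitary channels; for $X=\text{``CP''}$ one can alternatively invoke $\cpDiv=\InfDiv$, since a unitary channel is of the form $e^{L}$ and so $\mcU\mcE\,\mcU'$ is again a limit of products $\prod_j e^{L_j}$. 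The closure/``arbitrary precision'' clause is harmless because one-sided composition by the fixed unitary channels is continuous and hence maps limits of reachable channels to limits of reachable channels.

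I expect the only delicate point to be this last verification, at the two gluing times $t=1,2$ and with attention to the order in which maps compose: allowing a \emph{non-unitary} channel to conjugate a transition map --- which would happen if the unitary dynamical maps overlapped in time with the $\mcE_t$-block --- could destroy positivity or complete positivity. Placing the unitary dynamical maps strictly before and after the $\mcE_t$-block, as in the three-stage definition above, is exactly what confines every auxiliary operation to a one-sided composition or a conjugation by a \emph{unitary} channel, which are harmless; with that arrangement the remaining checks are routine.
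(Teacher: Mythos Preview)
Your argument is correct and matches the paper's approach: the paper states the theorem as an immediate consequence of the fact that concatenation with unitaries does not change the divisibility character of a channel, without giving further details. Your bijection on factorisations for $\Div$ and your explicit three-stage dynamical map for $\pDiv$ and $\cpDiv$ are a careful way to make rigorous exactly what the paper leaves implicit; in particular, the paper does not spell out the transport of the witnessing dynamical map, so your construction is more detailed than, but in the same spirit as, the paper's one-line justification.
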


\begin{figure} 
\centering
\includegraphics{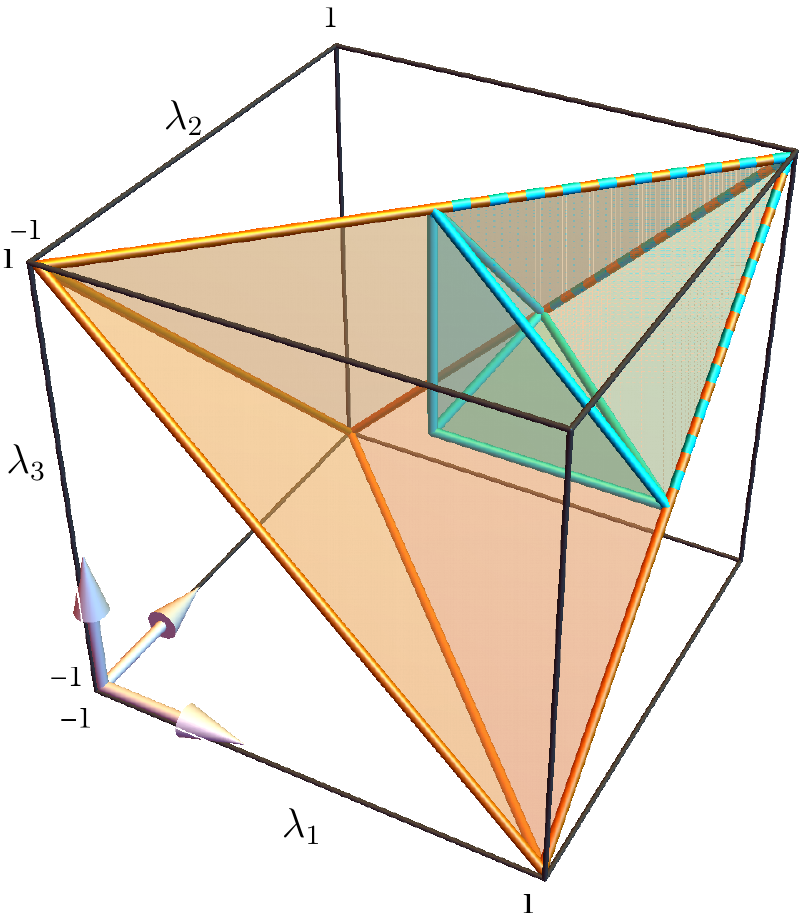}
\caption{Tetrahedron of Pauli channels. The corners correspond to unitary Pauli
operations ($\openone$, $\sigma_{x,y,z}$) while the rest can be written as
convex combinations of them.
The bipyramid in blue corresponds to channels with $\lambda_i>0 \forall i$,
\ie{} channels of the positive octant belonging to \pDiv{}. The whole set
\pDiv{} includes three other bipyramids corresponding to the other vertexes of
tetrahedron, \ie{} \pDiv{} enjoys the symmetries of the tetrahedron, see \eref{eq:pdiv_qubits}.  The faces of the
bipyramids matching the corners of the tetrahedron are subsets of the faces
of the tetrahedron, \ie{} contain Kraus rank three channels. Such channels are
then \pDiv but also \Ind{}, showing that the intersection shown in
\fref{fig:setscheme} is not empty.
\label{fig:tetra}
}
\end{figure} 

There is another another parametrization for qubit channels called
\textit{Lorentz normal decomposition}~\cite{Verstraete2001,Verstraete2002} which
is specially useful to characterize infinitesimal divisibility \InfDiv{}, 
and geometric aspects of entanglement~\cite{Myrheim}. 
This decomposition is derived from the theorem 3 of Ref.~\cite{Verstraete2001},
which essentially states that for a qubit state $\rho=\frac{1}{4}\sum_{i,j=0}^3
R_{ij}\sigma_i \otimes \sigma_j$ the matrix $R$ can be decomposed as $R=L_1
\Sigma L^\text{T}_2$. Here $L_{1,2}$ are proper orthochronous Lorentz transformations
and $\Sigma$ is either $\Sigma=\diag \left( s_0, s_1, s_2, s_3 \right)$ with
$s_0 \ge s_1 \ge s_2 \ge |s_3|$, or 
\begin{equation}
\Sigma=\left(
\begin{array}{cccc}
 a & 0 & 0 & b \\
 0 & d & 0 & 0 \\
 0 & 0 & -d & 0 \\
 c & 0 & 0 & -b+c+a 
\end{array}
\right).
\label{eq:state_normal_form_singular}
\end{equation}
In theorem 8 of Ref.~\cite{Verstraete2002} the authors make a similar claim,
exploiting the \Jami{} isomorphism. They forced
$b=0$ in order to have normal forms proportional to trace-preserving operations
in the case of Kraus rank deficient ones, see
\eref{eq:state_normal_form_singular}. The latter is equivalent to saying that the
decomposition of \Jami{} states leads to states that are proportional to
\Jami{} states. We didn't find a good argument to justify such an assumption and
found a counterexample (see appendix~\ref{sec:appendix}). Thus, we propose a
restricted version of their theorem:
\begin{theorem}[\textbf{Restricted Lorentz normal form for qubit quantum channels}]
For any full Kraus rank qubit channel $\mcE$ there exists rank-one completely
positive maps $\mcT_1,\mcT_2$ such that $\mcT=\mcT_1\mcE\mcT_2$ is proportional
to 
\begin{equation}
\left( \begin{array}{cc}
1 & \vec 0^\text{T} \\ 
\vec 0 & \Lambda
\end{array}  \right),
\label{eq:form_fullKraus_Lorentz}
\end{equation}
where
$\Lambda={\rm
        diag}(s_1,s_2,s_3)$ with $ 1\geq s_1 \geq s_2 \geq |s_3|$.
\label{thm:Lorentz}
\end{theorem}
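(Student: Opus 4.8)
The plan is to pass to the \Jami{} picture, where pre- and post-composing $\mcE$ with rank-one completely positive maps corresponds to conjugating the associated two-qubit operator by local $SL(2,\mathbb{C})$ elements, and then to apply Theorem~3 of Ref.~\cite{Verstraete2001}, using the full-Kraus-rank hypothesis to discard its singular branch \eref{eq:state_normal_form_singular}.

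First I would set up the dictionary. Let $\rho_\mcE=(\mcE\otimes\id)[\omega]$ be the \Jami{} operator of $\mcE$, with $\omega$ the maximally entangled projector of \eref{eq:ccp_general}; since $\mcE$ has full Kraus rank, $\rho_\mcE$ is a full-rank (hence strictly positive) two-qubit operator. Writing $\rho_\mcE=\tfrac14\sum_{ij}R_{ij}\,\sigma_i\otimes\sigma_j$, the real matrix $R$ coincides, up to normalisation and multiplication by the fixed signature matrix $\diag(1,1,-1,1)$ (the sign coming from $\sigma_y^{\mathrm T}=-\sigma_y$), with the real representation $\hat\mcE$. Recall the surjective double cover $SL(2,\mathbb{C})\to SO^+(1,3)$, $A\mapsto\Lambda_A$, fixed by $A\sigma_\mu A^\dagger=\sum_\nu(\Lambda_A)_{\nu\mu}\sigma_\nu$; for two qubits $(A_1\otimes A_2)\rho_\mcE(A_1\otimes A_2)^\dagger$ has $R$-matrix $\Lambda_{A_1}R\Lambda_{A_2}^{\mathrm T}$. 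Using the transpose (``ricochet'') identity $(\one\otimes A)\omega(\one\otimes A)^\dagger=(A^{\mathrm T}\otimes\one)\omega(\bar A\otimes\one)$, one checks that this conjugated operator is exactly the \Jami{} operator of $\mcT_1\mcE\mcT_2$, where $\mcT_1[X]=A_1XA_1^\dagger$ and $\mcT_2[X]=A_2^{\mathrm T}X\bar A_2$ are rank-one completely positive maps; conversely every proper orthochronous Lorentz transformation is realised as some $\Lambda_A$.

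Next I would apply Theorem~3 of Ref.~\cite{Verstraete2001} to $R$ (after normalising $\tr\rho_\mcE=1$): there exist proper orthochronous $L_1,L_2$ with $R=L_1\Sigma L_2^{\mathrm T}$, and $\Sigma$ is either $\diag(s_0,s_1,s_2,s_3)$ with $s_0\ge s_1\ge s_2\ge|s_3|$, or the singular form \eref{eq:state_normal_form_singular}. The crucial observation is that the second case is excluded here: the positive operator $\tfrac14\sum_{ij}\Sigma_{ij}\sigma_i\otimes\sigma_j$ built from \eref{eq:state_normal_form_singular} is a diagonal operator plus a multiple of $\proj{00}{11}+\proj{11}{00}$, and a one-line evaluation of the diagonal part gives $\bra{01}\rho\ket{01}=\tfrac14[a-b+c-(a-b+c)]=0$; positivity then forces this operator to have rank $\le 3$. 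Since $A_1\otimes A_2$ with $A_i\in SL(2,\mathbb{C})$ is invertible, conjugation preserves rank, so the rank-$4$ operator $\rho_\mcE$ cannot be brought to the singular form. Hence $\Sigma=\diag(s_0,s_1,s_2,s_3)$ with $s_0\ge s_1\ge s_2\ge|s_3|\ge 0$ and $s_0=\tr\rho_\mcE>0$.

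Finally I would translate back: choosing $A_1,A_2\in SL(2,\mathbb{C})$ with $\Lambda_{A_1}=L_1^{-1}$ and $\Lambda_{A_2}=L_2^{-1}$, the map $\mcT=\mcT_1\mcE\mcT_2$ (with $\mcT_1,\mcT_2$ the rank-one CP maps above) has \Jami{} operator whose $R$-matrix is $\Sigma$, whence $\hat\mcT$ is diagonal with entries proportional to $(s_0,s_1,-s_2,s_3)$. Absorbing into $\mcT_1$ a $\pi$-rotation of the Bloch sphere about the first axis (a unitary channel, so the composite is still rank-one completely positive) removes the stray sign, and dividing by $s_0$ yields $\hat\mcT\propto$ the matrix \eref{eq:form_fullKraus_Lorentz} with $\Lambda=\diag(s_1,s_2,s_3)$ and $1\ge s_1\ge s_2\ge|s_3|$ (using $s_0\ge s_i$ and $s_0>0$), as claimed. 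The only genuinely new ingredient is the rank argument ruling out \eref{eq:state_normal_form_singular}; the remaining work is bookkeeping of the \Jami{}/Bloch conventions and the $SL(2,\mathbb{C})$ lift. The main point to be careful about is that Theorem~3 of Ref.~\cite{Verstraete2001} applies verbatim to the $R$-matrix of a mixed full-rank state and that ``proportional to'' in the statement tolerates the positive prefactor $s_0$, which it does.
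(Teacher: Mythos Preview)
Your proposal is correct and follows essentially the same route the paper takes (and leaves largely implicit): apply Theorem~3 of Ref.~\cite{Verstraete2001} to the \Jami{} state of $\mcE$, then translate the proper orthochronous Lorentz transformations back to rank-one CP maps via the $SL(2,\mathbb{C})\to SO^+(1,3)$ cover. Your explicit rank argument---that the singular form \eref{eq:state_normal_form_singular} forces $\bra{01}\rho\ket{01}=0$ and hence rank $\le 3$, incompatible with a full-rank Choi operator---is precisely the step the paper only gestures at by restricting to full Kraus rank and deferring to \cite{Verstraete2001}.
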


The channel $\mcT$ is called the Lorentz normal form of the channel $\mcE$. For unital qubit channels $D$ coincides with $\Lambda$, thus in such case the form of \eqref{eq:form_fullKraus_Lorentz} holds for any Kraus rank.

\subsection{Divisibility} 

In this subsection we will recall the criteria to decide if a qubit channel
belongs to \Div{}, \pDiv{} and \cpDiv{} following \cite{cirac}. We shall
start with some general statements, and then focus on different types of channels (unital, diagonal non-unital and general ones).  We will also discuss in detail the characterization of \LDiv{}, which entails a higher complexity.

It was shown (\cite{cirac}, Theorem 11) that full Kraus rank
channels are divisible (\Div{}). This simply means that all points in the interior of
the set of channels correspond to divisible channels. 
Moreover, according to Theorem 23 of the same reference, qubit channels are
indivisible if and only if they have Kraus rank three and diagonal Lorentz
normal form.
Notice that since we dispute the theorem upon which such statement is based, 
the classification might be inaccurate, see the appendix. 
It follows from the definition that for qubit channels $\mcE$ is divisible if
and only if $\mcD$ is divisible. To test if $\mcD$ is divisible, we check that
all eigenvalues of its Choi matrix are different from zero.
 
A non-negative determinant of $\mcE$ is a necessary condition  for a general
channel to belong to \pDiv{} (\cite{cirac}, Proposition 15). For qubits, this is also sufficient (\cite{cirac}, Theorem 25), and
given that $\det{\mcD}=\det{\mcE}$, the condition for qubit channels
simply reads
\begin{equation}
\det \mcE=\lambda_1 \lambda_2 \lambda_3 \geq 0.
\label{eq:pdiv_qubits}
\end{equation}
However, to our knowledge, a simple condition for arbitrary
dimension is yet unknown. 

With respect to testing for CP-divisibility we restrict the discussion to qubit
channels. To characterize CP-divisible channels it is useful to consider the
Lorentz normal form for channels. 
A full Kraus rank qubit channel $\mcE$ belongs
to \cpDiv{} if and only if it has diagonal Lorentz normal form with
\begin{equation}
s_{\min}^2\geq s_1 s_2 s_3> 0
\label{eq:cpfullkraus_qubits}
\end{equation}
where $s_{\min}$ is the smallest of $s_1$, $s_2$ and $s_3$, see
theorem~\ref{thm:Lorentz} and \cite{Wolf2008}. 
For Kraus deficient channels the pertinent theorems are based on
Kraus deficient Lorentz normal forms that according to our appendix should
be reviewed.

Deciding L-divisibility, as mentioned above, is equivalent to proving the
existence of a hermiticity preserving generator which additionally fulfills the ccp
condition.

To prove the former we recall that every hermiticity preserving operator has
a real matrix representation when choosing a hermitian basis. Since quantum
channels preserve hermiticity, the problem is reduced on finding a real
logarithm $\log\hat\mcE$ given a real matrix $\hat \mcE$. This problem was
already solved by Culver~\cite{Culver1966} who characterized
completely the existence of real logarithms of real matrices. For diagonalizable matrices the results can
be summarized as follows:

\begin{theorem}[\textbf{Existence of hermiticity preserving generator}]
A non-singular matrix with real entries $\hat \mcE$ has a real generator
(\ie{} a $\log \hat \mcE$ has real entries) if and only if
the spectrum fulfills the following conditions:
\begin{enumerate}[label=(\roman*)]
\item negative eigenvalues are even-fold degenerate;
\item complex eigenvalues come in complex conjugate pairs.
\end{enumerate}
\label{thm:culver}
\end{theorem}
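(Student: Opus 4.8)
The plan is to reduce everything to the real block‑diagonal form of a diagonalizable real matrix and then treat one block at a time. First I would recall that a real matrix $\hat\mcE$ that is diagonalizable over $\mathbb{C}$ is real‑similar to a block‑diagonal matrix $B$ whose blocks are either $1\times 1$ scalars $\lambda$ (the real eigenvalues) or $2\times 2$ blocks $r\,R(\theta)$ with $R(\theta)=\left(\begin{smallmatrix}\cos\theta & -\sin\theta\\ \sin\theta & \cos\theta\end{smallmatrix}\right)$ and $0<\theta<\pi$, one block per conjugate pair $re^{\pm i\theta}$ of non‑real eigenvalues. Writing $\hat\mcE=SBS^{-1}$ with $S$ real and using that $M\mapsto SMS^{-1}$ sends real generators to real generators and commutes with the exponential, it suffices to decide, and when possible construct, a real logarithm of each block separately.

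For necessity I would argue as follows. Suppose $\hat\mcE=e^{L}$ with $L$ real. Condition (ii) is then automatic — it holds for every real matrix, since the characteristic polynomial has real coefficients — so only (i) needs work. Counting with multiplicity, the eigenvalues of $\hat\mcE$ are the numbers $e^{\ell}$ for $\ell$ in the spectrum of $L$. If $\lambda<0$ is an eigenvalue of $\hat\mcE$, then every $\ell$ with $e^{\ell}=\lambda$ has imaginary part in $\pi+2\pi\mathbb{Z}$, hence is non‑real; since $L$ is real such $\ell$ occur in conjugate pairs of equal multiplicity, none self‑conjugate, and $e^{\bar\ell}=\overline{e^{\ell}}=\lambda$. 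Summing the multiplicities of all these $\ell$ therefore gives an even number, which is (i).

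For sufficiency I would build $L$ blockwise on $B$. A positive eigenvalue $\lambda$ in a $1\times1$ block is $e^{\ln\lambda}$. A $2\times 2$ block $r\,R(\theta)$ equals $\exp\!\bigl((\ln r)I_2+\theta J\bigr)$ with $J=\left(\begin{smallmatrix}0&-1\\1&0\end{smallmatrix}\right)$, using $e^{\theta J}=R(\theta)$ and $[I_2,J]=0$. Finally, by (i) the eigenspace of a negative eigenvalue $-a$ (with $a>0$) has even dimension $2m$, and $\hat\mcE$ acts on it as $-a\,I_{2m}$; here I would write $-a\,I_{2m}=\exp\!\bigl((\ln a)I_{2m}+\pi K\bigr)$ for any real $K$ with $e^{\pi K}=-I_{2m}$, for instance $K=J\oplus\cdots\oplus J$ ($m$ summands), since $e^{\pi J}=R(\pi)=-I_2$. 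Each block generator is real, so conjugating their direct sum by $S$ produces a real $L$ with $e^{L}=\hat\mcE$.

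The only genuinely non‑formal step is the treatment of negative eigenvalues: a negative number has no real logarithm on a one‑dimensional invariant subspace, and the point is that \emph{even} multiplicity is exactly what allows one to replace the missing sign $-1$ by a planar rotation through angle $\pi$. Everything else is bookkeeping with the real block‑diagonal form. I would also flag that the restriction to diagonalizable $\hat\mcE$ is precisely what keeps the argument short: Culver's full result requires the even‑multiplicity condition to hold for each \emph{Jordan block} attached to a negative eigenvalue rather than merely for algebraic multiplicities, and the corresponding construction of the real logarithm is more delicate.
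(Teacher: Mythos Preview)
Your argument is correct. The reduction to the real block-diagonal form of a diagonalizable real matrix, the blockwise construction of a real logarithm (positive scalars, $2\times2$ rotation--scaling blocks, and pairing of negative eigenvalues via $\pi J$), and the necessity argument through the conjugate-pair structure of the non-real spectrum of a real $L$ are all sound. Your closing remark about the non-diagonalizable case, where the condition must be imposed Jordan-block by Jordan-block, is also apt.

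As for comparison: the paper does \emph{not} supply its own proof of this theorem. It is stated as a summary (restricted to diagonalizable matrices) of Culver's classification~\cite{Culver1966}, with the reader referred to that source. So there is no proof in the paper to compare against; your write-up provides exactly the self-contained argument the paper omits. If anything, your treatment is slightly more explicit than what one typically finds quoted, since you spell out both directions and identify where diagonalizability is actually used.
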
 

Let us examine this theorem for the particular case of qubits.
In this case this theorem means that real logarithm(s)
of $\hat \mcE$ exist if and only if $\mcE$ has either only positive eigenvalues, one
positive and two complex, or one positive and two equal non-positive
eigenvalues, apart from the trivial eigenvalue equal to one. Notice that
quantum channels with complex eigenvalues will
fulfill the last condition immediately since they preserve hermiticity.

We now continue discussing the multiplicity of the solutions of 
$\log \hat \mcE$, as finding an appropriate parametrization   
is essential to test for the ccp condition, see~\eref{eq:ccp_general}.
If $\hat \mcE$ has positive degenerate, negative, or complex
eigenvalues,
its real logarithms are not unique, and are spanned by \textit{real
logarithm
branches}~\cite{Culver1966}. In case of having negative eigenvalues,
it turns out that real logarithms always have a continuous parametrization, in addition to real branches due to the freedom of the Jordan normal
form transformation matrices. Given a real representation of $\mcE$, \ie{}
$\hat \mcE$, the Jordan form is given by $\hat \mcE=w J w^{-1}=\tilde w J
\tilde w^{-1}$, where $w = \tilde w K$ with $K$ 
belonging to a continuum of matrices that commutes with $J$~\cite{Culver1966}.
In the case of diagonalizable matrices, if there are no degeneracies, $K$
commutes with $\log(J)$.

Finally let
us note that if a channel belongs to \LDiv{}, unitary conjugations can bring it
to $\InfDiv{}\setminus\LDiv{}$ and vice versa.

\subsection{Unital channels} 
\label{subsubsec:unital}
We shall start our study of unital qubit channels, by considering Pauli
channels, defined as convex combinations of the unitaries $\sigma_i$:
\begin{equation}	
\mcE_\text{Pauli}[\rho]=\sum_{i=0}^3 p_i \sigma_i \rho \sigma_i,
\end{equation}
where $\sigma_0=\one$ and $p_i\geq 0$ with $\sum_i p_i=1$.
%
%
The special orthogonal normal form of a Pauli channel [see Eqs.~(\ref{eq:EUDU})
and (\ref{eq:orthogonalform})] has 
$\mcU_1=\mcU_2=\id{}$ and $\vec \tau = \vec0$. 
%
%
%
Thus, Pauli channels are fully characterized only by $\vec \lambda$.
Notice that every unital qubit channel can be written as 
\begin{equation}
\mcE_\text{unital}=\mcU_1 \mcE_{\text{Pauli}}\mcU_2.
\label{eq:pauli_unital}
\end{equation} 
This implies that arbitrary unital qubit channels can be expressed
as convex combinations of (at most) four unitary channels. 

Following theorem~\ref{thm:divisibility_using_orthogonal_form} it is straightforward to note that by characterizing \Div{}, \pDiv{} and \cpDiv{} of Pauli channels, the same conclusions hold for general unital qubit channels having the same $\vec \lambda$.
Additionally we can have a one-to-one geometrical view of the divisibility sets for Pauli channels given they have a one-to-one correspondence to the tetrahedron shown in \fref{fig:tetra}, defined by the inequalities \begin{align}
1+\lambda_i -\lambda_j -\lambda_k &\geq 0 \\
1+\lambda_1 +\lambda_2 +\lambda_3 &\geq 0
\label{eq:cp_inequalities}
\end{align}
with $i$, $j$ and $k$ all different~\cite{Ziman2005}.

\subsubsection{P-divisibility}

Let us discuss the divisibility properties of Pauli channels.
Divisibility in CPTP (\Div{}) is guaranteed for full Kraus rank channels, \ie{} for the interior channels of the tetrahedron. For Pauli channels this is equivalent to taking only the inequality of equations (\ref{eq:cp_inequalities}).
The characterization of
\pDiv{} can be done directly using \eref{eq:pdiv_qubits}, as it
depends only on $\vec \lambda$. This set is the intersection
of the tetrahedron with the octants 
where the product  of all $\lambda$s is positive. In fact, it consists of 
four triangular bipyramids starting in each vertex of the tetrahedron and
meeting in its center, see~\fref{fig:tetra}.
Let us study the intersection of this set with the set of unital
entanglement breaking (EB)
channels~\cite{Ziman2005}, forming an octahedron (being the intersection of the
tetrahedron with its space inversion, see~\fref{fig:bypy}). It is defined by the inequalities 
\begin{align}
\lambda_1 + \lambda_2 + \lambda_3 &\leq 1\nonumber\\
\lambda_i - \lambda_j - \lambda_k &\leq 1,
\label{eq:eb_inequalities}
\end{align} 
with $i$, $j$ and $k$ all different~\cite{Ziman2005}, together with \eref{eq:cp_inequalities}.
It follows that unital qubit channels that are not achieved by P-divisible
dynamical maps are necessarily entanglement breaking (see~\fref{fig:bypy} and
\fref{fig:cut1}). In fact this holds for general qubit channels, see
section~\ref{subsec:generalqubitchannels}.
\begin{figure} 
\centering
\includegraphics{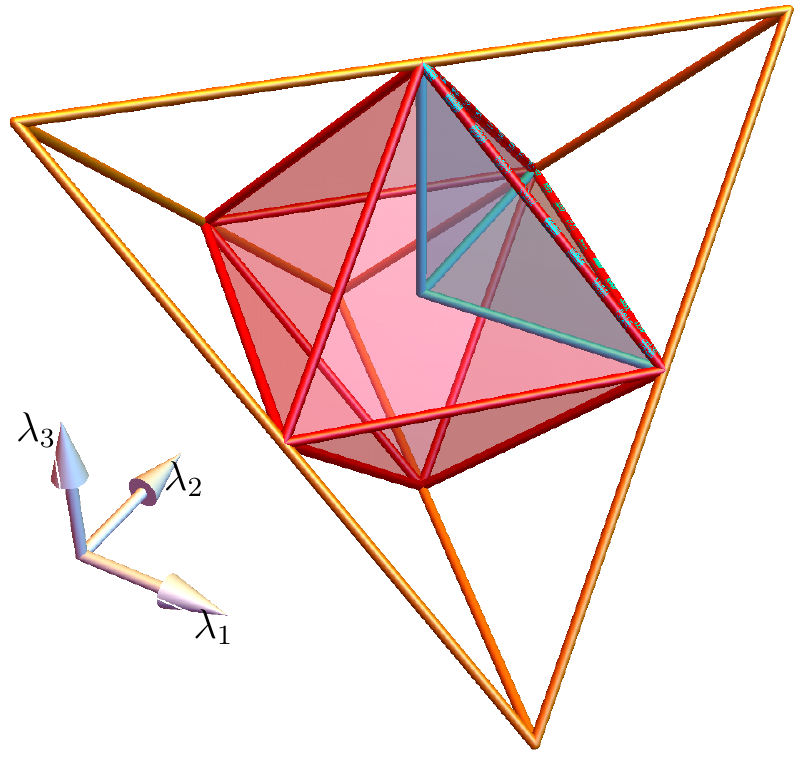}
\caption{Tetrahedron of Pauli channels with the octahedron of entanglement
breaking channels shown in red, see \eref{eq:eb_inequalities}. The blue pyramid inside the octahedron is the
intersection of the bipyramid shown in \fref{fig:tetra}, with the octahedron.
The complement of the intersections of the four bipyramids forms the set of
divisible but not infinitesimal divisible channels in PTP. Thus, a central feature of
the figure is that the set \Div{}$\setminus$\pDiv{} is always
entanglement breaking, but the converse is not true. \label{fig:bypy}}
\end{figure} 

\subsubsection{CP-divisibility}
The subset of CP-divisible Pauli channels, following \eref{eq:cpfullkraus_qubits} and theorem~\ref{thm:Lorentz},
is determined by the inequalities
\begin{equation}
  0<\lambda_1 \lambda_2 \lambda_3\leq \lambda_{\min}^2\,.
\label{eq:cpdivqubit}
\end{equation}
They determine a body
that is symmetric with respect to permutation of Pauli unitary channels (i.e.
in $\lambda_j$), hence, the set of \cpDiv{} of Pauli channels possesses the
symmetries of the tetrahedron. The set \cpDiv{}$\setminus$\LDiv{} is plotted
in~\fref{fig:cp}. 
Notice that this set coincides with the set of
unistochastic qubit channels, see Ref.~\cite{unistochastic}.

\subsubsection{L-divisibility}
Let us now derive the conditions for L-divisibility of Pauli channels with
positive eigenvalues $\lambda_1,\lambda_2,\lambda_3$ ($\lambda_0=1$). 
The logarithm of $\mcD$, induced by the principal logarithm of its
eigenvalues, is thus
\begin{equation}
L=K{\rm diag}(0,\log\lambda_1,\log\lambda_2,\log\lambda_3)K^{-1}\,,
\label{eq:logL_positive}
\end{equation}
which is real (hermiticity preserving). In case of no-degeneration
the dependency on $K$ vanishes and $L$ is unique. In such
case the ccp conditions
$\log \lambda_j-\log \lambda_k-\log \lambda_l\geq 0$ imply
\begin{align}
  \lambda_j\lambda_k\leq \lambda_l
\label{eq:hermpres}
\end{align}
for all combinations of mutually different $j,k,l$. This set (channels
belonging to \LDiv{} with positive eigenvalues)  forms a three dimensional
manifold, see \fref{fig:markov}. 

In case of degeneration, let us label the eigenvalues $\eta$, $\lambda$ and $\lambda$. In this case, the real solution for $L$ is not unique
and is parametrized by real branches in the degenerate 
subspace and by the continuous parameters of $K$~\cite{Culver1966}. 
Let us study the principal branch with $K=\one$. \eref{eq:hermpres} is
then reduced to 
\begin{equation}
\lambda^2\leq \eta \leq 1\;.
\label{eq:ccp_degenerate}
\end{equation}
Therefore, if these inequalities are fulfilled, the generator has Lindblad
form. If not, then \textit{a priori} other branches can fulfill the ccp condition and
consequently have a Lindblad form. Thus, \eref{eq:ccp_degenerate} provides
a sufficient condition for the channel to be in \LDiv{}. We will
see it is also necessary. 

Indeed, the complete positivity condition requires $\eta,\lambda\leq 1$, thus,
it remains to verify only the condition $\lambda^2\leq \eta$. It holds
for the case $\lambda\leq\eta$. If $\eta\leq\lambda$, then this condition
coincides with the CP-divisibility condition from \eref{eq:cpdivqubit}.
Since \LDiv{} implies \cpDiv{} the proof is completed. In conclusion,
the condition in \eref{eq:hermpres} is a necessary and sufficient
condition for a given Pauli channel with positive eigenvalues to belong to \LDiv{}.

Let us stress that the obtained subset of L-divisible channels
does not possess the tetrahedron symmetries. In fact, composing
$\mcD$ with a $\sigma_z$ rotation $\mcU_z={\rm diag}(1,-1,-1,1)$
results in the Pauli channel
$\mcD^\prime={\rm diag}(1,-\lambda_1,-\lambda_2,\lambda_3)$.
Clearly, if $\lambda_j$ are positive ($\mcD$ is L-divisible),
then $\mcD^\prime$ has non-positive eigenvalues. Moreover, if all $\lambda_j$
are different, then $\mcD^\prime$ does not have any
real logarithm, therefore, it cannot be L-divisible.
In conclusion, the set of L-divisible unital qubit channels is
not symmetric with respect to tetrahedron symmetries.

In what follows we will investigate the case of non-positive eigenvalues.
Theorem 3 implies that that eigenvalues have the form (modulo permutations)
$\eta,-\lambda,-\lambda$, where $\eta,\lambda\geq 0$. The corresponding
Pauli channels are $\mcD_x={\rm diag}(1,\eta,-\lambda,-\lambda)$,
$\mcD_y={\rm diag}(1,-\lambda,\eta,-\lambda)$,
$\mcD_z={\rm diag}(1,-\lambda,-\lambda,\eta)$, thus
forming three two-dimensional regions inside the tetrahedron.
Take, for instance, $\mcD_z$ that specifies a plane (inside the tetrahedron)
containing $I$, $\sigma_z$ and completely depolarizing channel
$\mcN={\rm diag}(1,0,0,0)$. The real logarithms for this case are given by
\begin{equation}
L=K\left(
\begin{array}{cccc}
0 & 0 & 0 & 0 \\ 
0 & \log( \lambda ) & (2 k +1) \pi & 0 \\ 
0 & -(2k +1)\pi & \log( \lambda ) & 0 \\ 
0 & 0 & 0 & \log( \eta )
\end{array} 
\right) K^{-1},
\label{eq:Lfornegative}
\end{equation}
where $k\in \mathbb{Z}$ and $K$, as mentioned above, belongs to a continuum of
matrices that commute with $\mcD_z$. Note that $L$ is always non-diagonal. For
this case (similarly for $\mcD_x$ and $\mcD_y$) the ccp condition reduces
again to conditions specified in \eref{eq:ccp_degenerate}. Using the same
arguments one arrives to a more general conclusion: \eref{eq:hermpres}
provides necessary and sufficient conditions for $L-divisibility$ of
a given Pauli channel, and if it is the case, the principal branch
with $K=\one$ has Lindblad form. The set of L-divisible
Pauli channels is illustrated in \fref{fig:markov}.

  In order to decide L-divisibility of general unital channels it remains to analyze the case of complex eigenvalues. The logarithms are parametrized as follows
\begin{equation}
L_{k,K}=w
K\log\left(J \right)_kK^{-1}w^{-1},
\label{eq:general_parametrization_of_L}
\end{equation} 
where $J$ is the \textit{real Jordan form} of the (qubit)
channel~\cite{Culver1966}:
\begin{equation}
J=\diag\left(1,c\right)\oplus\left( \begin{array}{cc}
a & -b \\ 
b & a
\end{array}  \right)
\label{eq:real_jordan_form}
\end{equation}
with $a\pm i b$ being the complex eigenvalues and $c>0$.
Let us note that $K\log\left(J \right)_kK^{-1}$ is reduced to equations
(\ref{eq:logL_positive}) and (\ref{eq:Lfornegative}) in the case of real
eigenvalues. In general, the generator is (up to diagonalization): 
\begin{multline}
K\log \left(J \right)_kK^{-1}=K\diag\left(0,\log(c)\right)
\oplus\nonumber\\ 
\left( \begin{array}{cc}
\log(\vert z \vert) & \arg(z) +2 \pi k \\ 
-\arg(z) -2 \pi k & \log(\vert z \vert)
\end{array} \right)K^{-1}.
\end{multline}
with $z=a+ib$.
The non-diagonal block of the logarithm has the same
structure as the real Jordan form of the channel, so $K$ also commutes with
$\log(J)_k$, leading to a countable parametric space of hermitian preserving
generators. In fact, generators of diagonalizable channels have continuous
parametrizations if and only if they have degenerate eigenvalues; the
non-diagonalizable case can be found elsewhere~\cite{Culver1966}. 
Since we are dealing with a diagonalization, the ccp condition can be very
complicated and depends in general on $k$, see ~\eref{eq:ccp_general}. But for
the complex case we can simplify the condition for qubit channels which have
exactly the form presented in ~\eref{eq:real_jordan_form}, say $\hat \mcE_\text{complex}$, \ie{} $w=\one$. In
such case the ccp condition is reduced to
\begin{equation}
a^2+b^2\leq c \leq 1.
\label{eq:ccp_complex}
\end{equation}
Note that it does not depend on $k$ and the second inequality is always
fulfilled for CPTP channels.


We can present the conditions for L-divisibility for the case of complex
eigenvalues.
The orthogonal normal form of $\hat \mcE_\text{complex}$ is 
$\hat\mcD=\diag \left(1,\eta,\lambda,\lambda \right)$ with $\eta=c$ and
$\lambda=\text{sign}(ab) \sqrt{a^2+b^2}$. The
ccp condition for degenerate eigenvalues, see~\eref{eq:ccp_degenerate}, is
reduced to~\eref{eq:ccp_complex} for this case. Therefore, the L-divisible
channels with form $\hat\mcE_\text{complex}$ are also L-divisible, up to
unitaries. This also applies for channels arising from changing positions of
the $1\times 1$ block containing $c$ and the $2\times 2$ block containing $a$
and $b$ in $\hat\mcE_\text{complex}$, with orthogonal normal forms $\diag
\left(1,\lambda,\eta,\lambda \right)$ and $\diag \left(1,\lambda,\lambda,\eta
\right)$. The set containing them is shown
in~\fref{fig:tetra_complex_eigenvalues}.
 
%
\begin{figure} 
\centering
\includegraphics{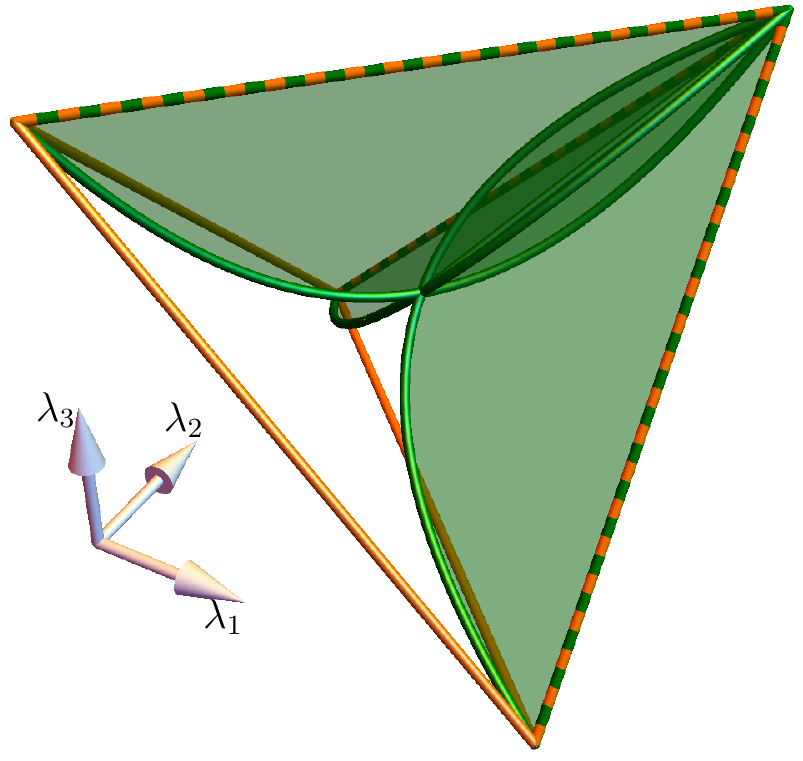}
\caption{
Tetrahedron of Pauli channels, with qubit unital L-divisible
channels of the form $\hat\mcE_\text{complex}$ (see main text). Note that the set
does not have the symmetries of the tetrahedron.
\label{fig:tetra_complex_eigenvalues}}
\end{figure} 
%
%



\subsubsection{Divisibility relations}

Consider a Pauli channel with
$0<\lambda_{\min}=\lambda_1\leq\lambda_2\leq\lambda_3<1$, thus, the condition
$\lambda_1\lambda_2\leq \lambda_3$ trivially holds. Since
$\lambda_1\lambda_2\leq \lambda_1\lambda_3\leq \lambda_2\lambda_3\leq\lambda_2$,
it follows that $\lambda_1\lambda_3\leq\lambda_2$, thus, two (out of three)
L-divisibility conditions hold always for Pauli channels with
positive eigenvalues. Moreover, one may observe that CP-divisibility
condition \eref{eq:cpdivqubit} reduces to one
of L-divisibility conditions $\lambda_2\lambda_3\leq\lambda_1$.
In conclusion, the conditions of CP-divisibility and L-divisibility
for Pauli channels with positive eigenvalues coincide, thus,
in this case \cpDiv{} implies \LDiv{}.

Concatenating (positive-eigenvalues) Pauli channels with
$\mcD_{x,y,z}$ one can generate the whole set of \cpDiv{} Pauli channels.
Using the identity $\cpDiv{}=\InfDiv{}$ and considering
\eref{eq:pauli_unital}we can formulate the following theorem:
\begin{theorem}[Infinitesimal divisible unital channels]
Let $\mcE^{\text{CP}}_{\text{unital}}$ be an arbitrary infinitesimal divisible unital qubit channel. There exists at least one L-divisible Pauli channel $\tilde \mcE$, and two unitary conjugations $\mcU_1$ and $\mcU_2$, such that
$$\mcE^\text{CP}_{\text{unital}}=\mcU_1  \tilde \mcE \mcU_2\,.$$
Notice that if $\mcE^\text{CP}_{\text{unital}}$ is invertible, $\tilde \mcE=e^L$.
\end{theorem}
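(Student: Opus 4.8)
The plan is to decompose an arbitrary infinitesimal divisible unital qubit channel using \eqref{eq:pauli_unital}, reduce the problem to Pauli channels, and then exploit the divisibility relations already established. First I would write $\mcE^\text{CP}_\text{unital}=\mcU_1 \mcE_\text{Pauli}\mcU_2$ via \eqref{eq:pauli_unital}; since concatenation with unitaries does not alter membership in any of the divisibility classes (by Theorem~\ref{thm:divisibility_using_orthogonal_form} applied with $X=\text{``CP''}$, together with the identity $\cpDiv=\InfDiv$), the channel $\mcE_\text{Pauli}$ is itself infinitesimal divisible, hence CP-divisible. Because $\mcE_\text{Pauli}$ has the special orthogonal normal form $\mcD=\diag(1,\lambda_1,\lambda_2,\lambda_3)$, Theorem~\ref{thm:culver} restricts the admissible spectra: either all $\lambda_j>0$, or (modulo permutation) the eigenvalues are $\eta,-\lambda,-\lambda$ with $\eta,\lambda\ge 0$, or one real positive and a complex conjugate pair. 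In every case the channel is, up to further unitary conjugation, one of the L-divisible Pauli channels identified in the preceding subsections, and the theorem follows by renaming $\tilde\mcE$ to be that L-divisible Pauli channel and absorbing the extra rotations into $\mcU_1,\mcU_2$.

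The core case is positive eigenvalues. Here I would invoke the \textbf{Divisibility relations} paragraph directly: for a Pauli channel with $0<\lambda_j<1$ the CP-divisibility condition \eqref{eq:cpdivqubit} reduces to the single inequality $\lambda_2\lambda_3\le\lambda_1$ (after ordering $\lambda_1\le\lambda_2\le\lambda_3$), while the other two L-divisibility inequalities $\lambda_1\lambda_2\le\lambda_3$ and $\lambda_1\lambda_3\le\lambda_2$ hold automatically; hence all three L-divisibility conditions \eqref{eq:hermpres} are satisfied, so $\mcE_\text{Pauli}\in\LDiv$ and one may take $\tilde\mcE=\mcE_\text{Pauli}$ with $\mcU_1=\mcU_2=\id$. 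If $\mcE_\text{Pauli}$ is invertible the principal-branch generator from \eqref{eq:logL_positive} with $K=\one$ has Lindblad form, which gives the final remark $\tilde\mcE=e^L$. The boundary cases $\lambda_j\in\{0,1\}$ are handled by continuity/closure, since $\LDiv$ is defined as the closure of an L-divisible dynamical map.

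For the remaining spectral types I would argue that CP-divisibility forces these channels to sit, up to a Pauli unitary $\mcU_z$ (or $\mcU_x$, $\mcU_y$), inside the positive-eigenvalue L-divisible region. Concretely, for eigenvalues $\eta,-\lambda,-\lambda$ one composes with the appropriate Pauli rotation $\mcD_{x,y,z}$ (as in the discussion around \eqref{eq:Lfornegative}) to land on a positive-eigenvalue Pauli channel; CP-divisibility of the original channel transfers through this unitary, so by the positive case the resulting channel is L-divisible, and undoing the rotation exhibits $\mcU_1,\mcU_2,\tilde\mcE$. The complex-eigenvalue case is treated identically using the reduction $\hat\mcD=\diag(1,\eta,\lambda,\lambda)$ with $\eta=c$, $\lambda=\text{sign}(ab)\sqrt{a^2+b^2}$ established above, where the ccp condition collapses to \eqref{eq:ccp_complex}. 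The main obstacle I anticipate is bookkeeping: making sure the ``two unitary conjugations'' genuinely suffice, i.e. that the rotation bringing $\mcE$ to orthogonal normal form, the possible Pauli correction needed to flip signs of eigenvalues, and the orthogonal freedom in the L-divisible parametrization all compose into just $\mcU_1(\cdot)\mcU_2$ — this is a finite product of unitary channels, hence again a single pair, but the argument must be stated cleanly rather than hand-waved. A secondary subtlety is the non-invertible case, where $L=\log\mcE$ does not exist and one must appeal to the closure in the definition of $\LDiv$ rather than producing $e^L$ explicitly.
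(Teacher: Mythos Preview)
Your overall strategy coincides with the paper's: decompose via \eqref{eq:pauli_unital}, transfer CP-divisibility to the Pauli factor by Theorem~\ref{thm:divisibility_using_orthogonal_form}, use the observation that positive-eigenvalue CP-divisible Pauli channels are already L-divisible, and absorb any additional Pauli rotation into $\mcU_1,\mcU_2$. The paper's argument is essentially the single sentence preceding the theorem: ``Concatenating (positive-eigenvalues) Pauli channels with $\mcD_{x,y,z}$ one can generate the whole set of \cpDiv{} Pauli channels,'' together with \eqref{eq:pauli_unital}.

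There is, however, a small logical slip in your case split. You invoke Theorem~\ref{thm:culver} to ``restrict the admissible spectra'' of a CP-divisible Pauli channel, but that theorem characterizes channels admitting a real logarithm, not CP-divisible channels. A Pauli channel with eigenvalues such as $(0.5,-0.3,-0.2)$ is CP-divisible by \eqref{eq:cpdivqubit} yet violates the hypotheses of Theorem~\ref{thm:culver} (two \emph{distinct} negative eigenvalues), and it is precisely an example of a CP-divisible Pauli channel that is \emph{not} L-divisible. The correct spectral constraint comes directly from \eqref{eq:cpdivqubit} (or just $\det\ge 0$): either all $\lambda_j>0$, or exactly one is positive and two are negative (not necessarily equal), plus closure. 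Your Pauli-rotation trick then handles the entire second case at once, so the fix is purely cosmetic---just replace the appeal to Theorem~\ref{thm:culver} by the sign pattern forced by $\lambda_1\lambda_2\lambda_3>0$. Relatedly, the ``complex conjugate pair'' case is vacuous here: $\mcE_\text{Pauli}$ is already diagonal with real entries, so complex eigenvalues cannot occur; that discussion in the paper concerns general unital channels before reduction, not the Pauli factor you have already extracted.
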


\begin{figure} 
\centering
\includegraphics{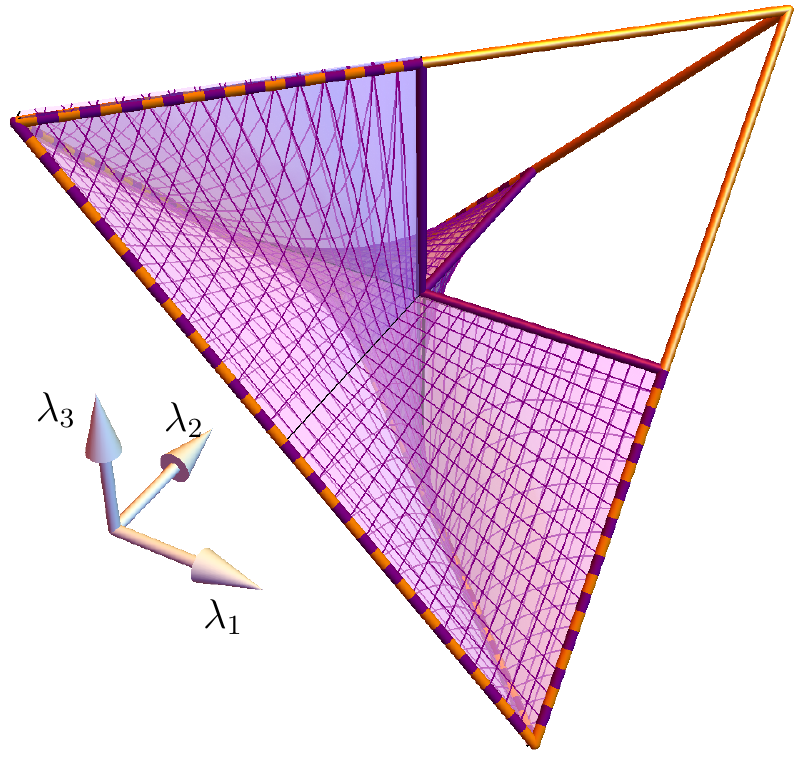}
\caption{Tetrahedron of Pauli channels with part of the set of CP-divisible, 
see \eref{eq:cpdivqubit}, but
not L-divisible channels (\cpDiv{}$\setminus$\LDiv{}) shown in purple. The
whole set \cpDiv{} 
is obtained applying the symmetry transformations of the tetrahedron to the 
purple volume. 
\label{fig:cp}}
\end{figure} 

Let us continue with another equivalence relation holding for Pauli channels. Regarding infinitely divisibility channels, we know that, in general, $\text{\LDiv{}}\subset\text{\InftyDiv{}}$, however, for Pauli channels the corresponding subsets coincide.

\begin{theorem}[Infinitely divisible Pauli channels]
The set of L-divisible Pauli channels is
equivalent to the set of infinitely divisible Pauli channels.
\label{thm:pauli_infinity}
\end{theorem}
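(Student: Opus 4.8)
The plan is to establish the double inclusion $\LDiv \subseteq \InftyDiv$ (already known in general) and $\InftyDiv \subseteq \LDiv$ (the content to be proven) for Pauli channels. The forward inclusion is trivial: if $\mcD = e^{tL}$ for a Lindblad generator $L$, then $\mcD = (e^{(t/n)L})^n$ for every $n$, and the closure points are handled by a limiting argument. So the real work is the reverse inclusion: every infinitely divisible Pauli channel is L-divisible. My strategy is to use Denisov's structure theorem, quoted in Section~\ref{subsec:relations}: an infinitely divisible channel has the form $\mcE = \mcE_0 e^{L}$ with $L$ a Lindblad generator and $\mcE_0$ an idempotent satisfying $\mcE_0 L \mcE_0 = \mcE_0 L$. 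First I would argue that for Pauli channels one may take $\mcE_0$ itself to be a Pauli channel (project Denisov's decomposition onto the Pauli-diagonal structure, using that the set of Pauli channels is preserved under the relevant operations, or alternatively exploit \fref{thm:divisibility_using_orthogonal_form}-type invariance under the Pauli twirl).

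Next I would classify the idempotent Pauli channels. In the Bloch representation a Pauli channel is $\mcD = \diag(1,\lambda_1,\lambda_2,\lambda_3)$, and idempotency forces each $\lambda_i \in \{0,1\}$. This gives only a handful of cases: the identity ($\vec\lambda = (1,1,1)$), the completely depolarizing channel $\mcN = \diag(1,0,0,0)$, and the ``partial dephasing'' idempotents like $\diag(1,0,0,1)$, $\diag(1,1,0,0)$, $\diag(1,0,1,0)$ (the ones with exactly one or two $\lambda_i$ equal to $1$ but note $\diag(1,1,1,0)$ etc.\ are not CPTP, which trims the list further). For each admissible idempotent $\mcE_0$ I would then analyze the product $\mcE = \mcE_0 e^{L}$: since everything is simultaneously diagonal in the Pauli basis, $e^{L}$ is $\diag(1,e^{\ell_1},e^{\ell_2},e^{\ell_3})$ with the $\ell_i$ real (possibly $-\infty$ in the closure), and the constraint $\mcE_0 L \mcE_0 = \mcE_0 L$ restricts which generators pair with which idempotent. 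The upshot I expect: $\mcE_0 = \id$ returns $\mcE = e^L$, which is L-divisible by the necessary-and-sufficient conditions \eref{eq:hermpres} once one checks the Lindblad/ccp conditions are automatically met; $\mcE_0 = \mcN$ gives the completely depolarizing channel, which lies in $\LDiv$ as a closure point (it is $\lim_{t\to\infty} e^{tL}$ for the depolarizing Lindbladian); and the partial-dephasing idempotents times $e^L$ land on the two-dimensional families $\mcD_{x,y,z}$ (or their limits), which were shown above to be L-divisible including the degenerate and $\eta\to 0$ cases.

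The main obstacle I anticipate is the interface between Denisov's decomposition and the closure operation in the definition of $\LDiv$ — i.e.\ making rigorous that the non-invertible infinitely divisible Pauli channels (those with some $\lambda_i = 0$, where $\log\mcD$ genuinely fails to exist) are approximated by genuine L-divisible dynamical maps rather than merely satisfying the algebraic conditions formally. I would handle this by exhibiting explicit Lindblad generators: for a target $\diag(1,0,\lambda,\lambda)$ with $0<\lambda\le 1$, combine the dephasing generator (sending one axis to zero at finite time as a limit, or rather a generator whose flow has $\lambda_1(t)\to 0$) with the degenerate generator of \eref{eq:Lfornegative}-type producing the equal pair, and check that the corresponding one-parameter family stays inside the tetrahedron and converges to the target; the conditions \eref{eq:ccp_degenerate}, already verified above, guarantee the Lindblad form along the way. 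A secondary subtlety is ensuring no infinitely divisible Pauli channel with negative eigenvalues is missed — but Denisov's theorem combined with $\det \mcE \ge 0$ (from $\cpDiv = \InfDiv \supseteq \InftyDiv$) and Theorem~\ref{thm:culver} forces negative eigenvalues to occur as the degenerate pair $-\lambda,-\lambda$, which is precisely the $\mcD_{x,y,z}$ family already settled, so the classification closes. Assembling these cases yields $\InftyDiv \subseteq \LDiv$ for Pauli channels, and with the trivial reverse inclusion the two sets coincide.
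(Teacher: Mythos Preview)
Your proposal is correct and follows essentially the same route as the paper: invoke Denisov's decomposition $\mcE=\mcE_0 e^{L}$, classify the admissible idempotents, and in each case exhibit the resulting Pauli channel as a limit of an explicit L-divisible family. The one place your sketch differs is the reduction step: rather than arguing (via a Pauli twirl, which is not multiplicative and so does not obviously preserve the factorization) that $\mcE_0$ may be taken Pauli, the paper simply uses that $e^{L}$ is invertible, so the matrix rank of $\hat\mcE$ equals that of $\hat\mcE_0$, and then directly lists the Pauli channels of each possible rank (rank~4: $\mcE=e^{L}$; rank~1: $\mcE=\mcN$; rank~2: the families $\diag(1,\lambda,0,0)$ etc.\ with $\lambda>0$), handling each exactly as you outline.
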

\begin{proof}
A channel
is infinitely divisible if and only if it can be written as $\mcE_0 e^L$, where $\mcE_0$
is an idempotent channel satisfying $\mcE_0 L \mcE_0=\mcE_0 L$ and $L$ has
Lindblad form~\cite{Denisov1989}. 
The only idempotent qubit channels are contractions of the Bloch
sphere into single points, diagonalization channels $\mcE_\diag$ transforming
Bloch sphere into a line connecting a pair of basis states, and the identity
channel. 
Among the single-point contractions, the only one that is a Pauli channel is the contraction of the Bloch sphere into the complete mixture. In particular, $\mcE = \mcN e^L=\mcN$ for all $L$. The channel $\mcN$
belongs to the closure of \LDiv{}, because a sequence of channels $e^{L_n}$
with $\hat{L}_n=\diag \left(0,-n,-n,-n \right)$ converges to $\hat\mcN$ in the
limit $n\to\infty$. 
For the case of $\mcE_0$ being the identity channel we have $\mcE=e^L$, thus,
trivially such infinitely divisible channel $\mcE$ is in \LDiv{} too. 
It
remains to analyze the case of diagonalization channels. First, let us
note that the matrix of $e^{\hat{L}}$ is necessarily of full rank,
since ${\rm det}\hat{\mcE}\neq 0$. It follows that the matrix
$\hat{\mcE}=\hat\mcE_{\diag} e^{\hat{L}}$ has rank two as $\hat\mcE_{\diag}$
is a rank two matrix, thus, it takes one of the following forms
$\hat{\mcE}_{x}^\lambda=\diag \left(1,\lambda,0,0\right)$,
$\hat{\mcE}_{y}^\lambda=\diag \left(1,0,\lambda,0\right)$,
$\hat{\mcE}_{z}^\lambda=\diag \left(1,0,0,\lambda\right)$.
The infinitely divisibility implies $\lambda>0$ in order to keep
the roots of $\lambda$ real. In what follows we will show that
$\hat{\mcE}_z$ belongs to (the closure of) \LDiv{}. Let us
define the channels $\hat\mcE_z^{\lambda,
\epsilon}=\diag\left(1,\epsilon,\epsilon,\lambda \right)$ with $\epsilon>0$.
The complete positivity and ccp conditions
translate into the inequalities $\epsilon\leq \frac{1+\lambda}{2}$
and $\epsilon^2\leq \lambda$, respectively; therefore one can always find an
$\epsilon>0$ such that $\hat \mcE_z^{\lambda, \epsilon}$ is a L-divisible
channel. 
If we choose $\epsilon=\sqrt{\lambda}/n$ with $n\in
\mathbb{Z}^+$, the
channels $\hat
\mcE_{z,n}=\diag\left(1,\sqrt{\lambda}/n,\sqrt{\lambda}/n,\lambda \right)$ form
a sequence of L-divisible channels converging to $\hat\mcE_z^\lambda$ when
$n\to\infty$. The analogous reasoning implies that $\hat{\mcE}_x^\lambda,
\hat{\mcE}_y^\lambda\in\LDiv{}$ too. Let us note that one parameter
family $\mcE_z$ are convex combinations of the complete
diagonalization channel
$\hat{\mcE}_z^1=\diag \left(1,0,0,1\right)$ 
and the complete mixture contraction $\hat \mcN$.
This completes the proof.
\end{proof}

Finally, let us remark that using theorem 23 of Ref.~\cite{cirac} we
conclude that the intersection $\pDiv{}\cap \Ind{}$ depicted
in~\fref{fig:setscheme} is not empty. To show this, notice that applying the
mentioned theorem to Pauli channels we get that the faces of the tetrahedron
are indivisible in CPTP channels. However, there are channels with positive
determinant inside the faces, for example $\diag
\left(1,\frac{4}{5},\frac{4}{5},\frac{3}{5}\right)$. Therefore we conclude that
up to unitaries, $\pDiv{}\cap \Ind{}$ correspond to the union of the four faces
faces of the tetrahedron minus the faces of the octahedron that intersects with
the faces of the tetrahedron, see \fref{fig:bypy}. We have to remove such
intersection since it corresponds to channels with negative determinant, \ie{}
not in \pDiv{}.

To get a detailed picture of the position and inclusions of the divisibility
sets, we illustrate in \fref{fig:cut1} two slices of the tetrahedron where
different types of divisibility are visualized. Notice the non-convexity of the
considered divisibility sets.

\begin{figure} 
\centering
\includegraphics{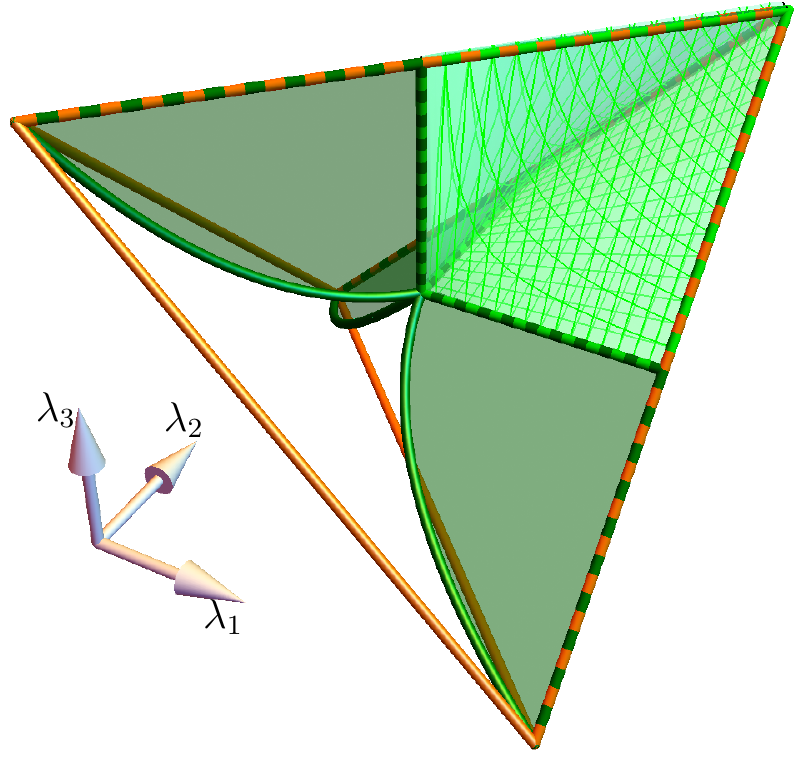}
\caption{Tetrahedron of Pauli channels with the set of L-divisible channels
(or equivalently infinitely divisible, see Theorem~\ref{thm:pauli_infinity}) shown in green, see  equations (\ref{eq:hermpres}) and (\ref{eq:ccp_degenerate}). The solid set corresponds to channels with
positive eigenvalues, and the 2D sets correspond to the negative eigenvalue
case. The point where the four sets meet corresponds to the \textit{total
depolarizing} channel. Notice that this set {\it does not} have the symmetries
of the tetrahedron. \label{fig:markov}}
\end{figure} 
\begin{figure*} 
\centering
\includegraphics[]{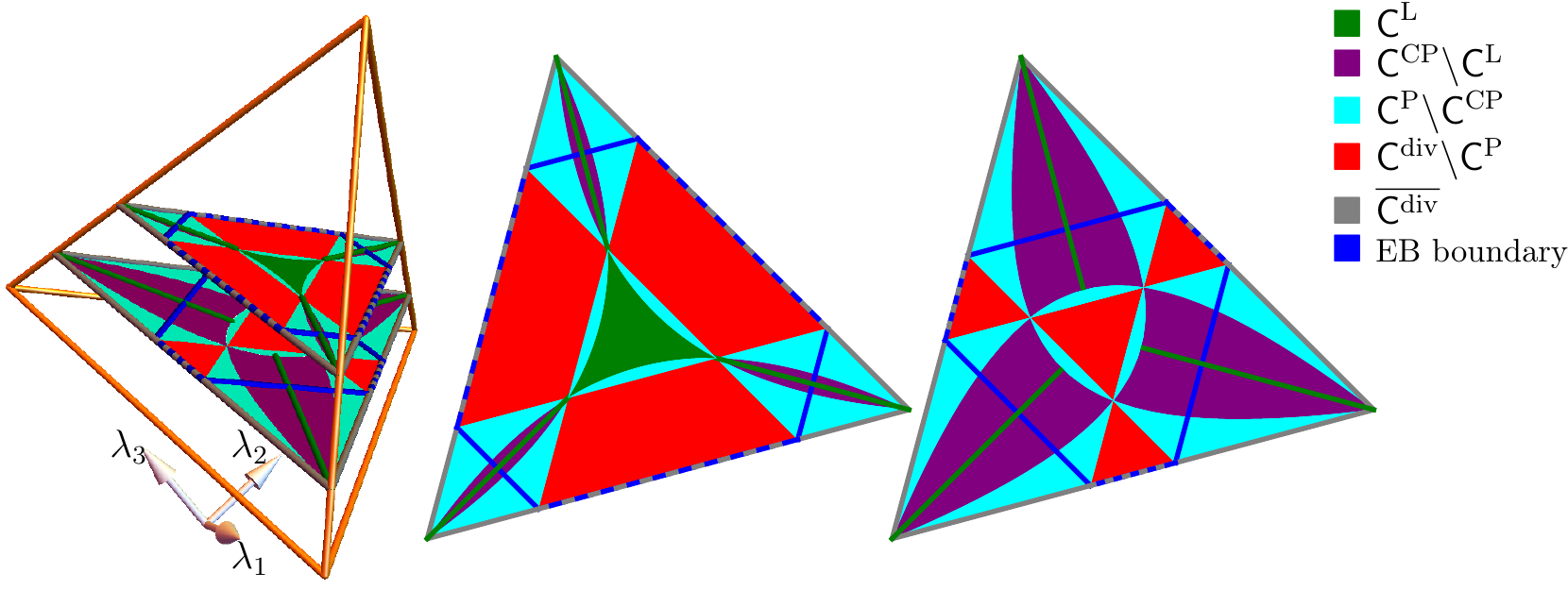}
\caption{We show two slices of the unitary tetrahedron (figure in the left) determined by 
$\sum_i \lambda_i=0.4$ (shown in the center) and $\sum_i \lambda_i=-0.4$ (shown in the right). The non-convexity of the
divisibility sets  can be seen, including the set of indivisible channels. The convexity
of sets ${\sf C}$ and entanglement breaking channels can also be noticed in the slices. A central
feature is that the set $\Div{}\setminus\pDiv{}$ is always inside the
octahedron of entanglement breaking channels.  \label{fig:cut1}}
\end{figure*} 
\subsection{Non-unital qubit channels} 
\label{subsec:generalqubitchannels}
Similar to unital channels, using
theorem~\ref{thm:divisibility_using_orthogonal_form} we are able to
characterize \Div{}, \pDiv{} and \cpDiv{} by studying special orthogonal normal
forms of non-unital channels. They are characterized by $\vec \lambda$ and
$\vec \tau$, see~\eref{eq:orthogonalform}. Thus, we can study if a channel is
\Div{} by computing the rank of its Choi matrix. For this case algebraic
equations are in general fourth order polynomials. In fact, 
in Ref.~\cite{Lukasz} a condition in terms of the eigenvalues and $\vec \tau$
is given.  For special cases, however, we can
obtain compact expressions, see~\fref{fig:cutnonunital2}. The characterization
of \pDiv{} is given by~\eref{eq:pdiv_qubits} (note that it only depends on
$\vec \lambda$), and \cpDiv{} is tested, for full Kraus rank non-unital channels,
using~\eref{eq:cpfullkraus_qubits}, see Ref.~\cite{Verstraete2001} for the
calculation of the $s_i$'s. For the characterization of \LDiv{} we use
the results developed at the end of the last section, see Eqs.~(\ref{eq:ccp_degenerate})-(\ref{eq:ccp_complex}).

We can plot illustrative pictures even though the whole space of qubit
channels has $12$ parameters. This can be done using orthogonal normal forms
and fixing $\vec \tau$, exactly in the same way as the unital case.
Recall that unitaries only modify \LDiv{}, leaving the shape of other sets
unchanged. CPTP channels are represented as a volume inside the
tetrahedron presented in \fref{fig:tetra}, see \fref{fig:cutnonunital2}.  In
the later figure we show a slice corresponding to $\vec \tau=\left(1/2, 0,0
\right)^\text{T}$. Indeed, it has the same structure of the slices for the
unital case, but deformed, see \fref{fig:cut1}. 
A difference with respect to the unital case is that L-divisible
channels with negative eigenvalues (up to unitaries) are not completely inside
CP-divisible channels. A part of them are inside the \pDiv{}
channels.

A central feature of Figs. \ref{fig:cut1} and \ref{fig:cutnonunital2} is that
the set $\Div{}\setminus\pDiv{}$ is inside the convex slice of the set of
entanglement breaking channels (deformed octahedron). Indeed, we can proof the
following theorem. 
\begin{theorem}[Entanglement breaking channels and divisibility]
  Consider a qubit channel $\mcE$.
  If $\det\hat\mcE<0$, then $\mcE$ is entanglement breaking,
  i.e. all qubit channels outside $\pDiv{}$ are entanglement breaking.
\label{thm:EB}
\end{theorem}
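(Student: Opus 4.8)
The plan is to reduce the claim to the known classification of qubit entanglement breaking (EB) channels via the special orthogonal normal form. By Theorem~\ref{thm:divisibility_using_orthogonal_form} (together with the fact, used repeatedly in the paper, that the EB property is also invariant under unitary conjugation, since $\mcU_1\mcE\mcU_2$ is EB iff $\mcE$ is), it suffices to prove the statement for the special orthogonal normal form $\mcD=(D,\vec\tau)$ with $D=\mathrm{diag}(\lambda_1,\lambda_2,\lambda_3)$, because $\det\hat\mcE=\det\hat\mcD=\lambda_1\lambda_2\lambda_3$. So I would assume $\lambda_1\lambda_2\lambda_3<0$ and show $\mcD$ is EB.

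The first step is to recall the complete-positivity constraints on $(D,\vec\tau)$: the four eigenvalues of the Choi matrix of $\mcD$ must be nonnegative, which for qubit channels in this form gives a system of inequalities coupling $\vec\lambda$ and $\vec\tau$ (the Ruskai--Szarek--Werner conditions; these are exactly the fourth-order polynomial conditions alluded to around \eqref{eq:orthogonalform} and in Ref.~\cite{Lukasz}). The second step is to recall the EB criterion for qubit channels: a qubit channel is EB iff its Choi matrix is separable, which for the normal form $(D,\vec\tau)$ reduces to a companion system of inequalities; in the unital case $\vec\tau=\vec 0$ this is precisely the octahedron condition \eqref{eq:eb_inequalities}, $|\lambda_i|+|\lambda_j|+|\lambda_k|\le 1$ appropriately signed, and the general statement is its non-unital deformation. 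The core of the argument is then purely the implication: CP constraints $+$ $\lambda_1\lambda_2\lambda_3<0$ $\Longrightarrow$ separability constraints. One clean way to see the unital skeleton of this: among $\{\lambda_i\}$ an odd number are negative, so either one or three are negative; writing $\mu_i=|\lambda_i|$, the CP inequalities \eqref{eq:cp_inequalities} applied with the actual signs force $\mu_1+\mu_2+\mu_3\le 1$ (when all three negative, directly; when exactly one, say $\lambda_3<0$, combine $1+\lambda_3-\lambda_1-\lambda_2\ge0$ type inequalities), and $\mu_1+\mu_2+\mu_3\le1$ implies every EB inequality in \eqref{eq:eb_inequalities}. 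So in the unital case the implication is immediate from the two triangle-type systems. The remaining work is the non-unital deformation.

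The main obstacle is therefore the non-unital case: one must show that the $\vec\tau$-dependent CP conditions, together with $\det<0$, still imply the $\vec\tau$-dependent separability conditions. I would handle this either (i) algebraically, by invoking the explicit eigenvalue/$\vec\tau$ divisibility and EB criteria of Refs.~\cite{Lukasz,Verstraete2001,Verstraete2002} and checking the polynomial implication directly, or — more conceptually — (ii) by a continuity/convexity argument: the set of channels with $\det\hat\mcE\le 0$ is closed, the boundary $\det\hat\mcE=0$ consists of non-invertible (Kraus-rank $\le 3$) channels, and by Theorem~23 of Ref.~\cite{cirac} the Kraus-rank-$3$ boundary pieces are exactly the ones with diagonal Lorentz normal form, for which EB can be read off; then one extends into the open region $\det<0$ using that CP channels with fixed sign pattern of $\vec\lambda$ and $\det<0$ form a connected region whose closure meets the unital slice, where EB is already established, and EB-ness is preserved along the connecting segments because mixing toward the already-EB unital representative with a rank-$\le 2$ (hence EB) channel keeps the Choi matrix separable. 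Approach (ii) avoids grinding the quartics but needs the connectivity claim stated carefully; approach (i) is routine but tedious. I would present (i) as the rigorous route and remark that it matches the picture in \fref{fig:cutnonunital2}, where $\Div\setminus\pDiv$ visibly sits inside the deformed octahedron. I would close by noting the converse fails (EB channels with $\det\ge0$ exist, e.g. the unital example $\mathrm{diag}(1,\tfrac45,\tfrac45,\tfrac35)$ discussed above), so the inclusion is strict, and flag the higher-dimensional analogue as open, as announced in the abstract.
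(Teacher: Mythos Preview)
Your unital reduction is correct and essentially matches the Pauli-channel step in the paper's proof: with an odd number of negative $\lambda_i$, one of the CP inequalities \eqref{eq:cp_inequalities} immediately gives $|\lambda_1|+|\lambda_2|+|\lambda_3|\le 1$, which dominates all the octahedron inequalities \eqref{eq:eb_inequalities}. So for unital channels the two arguments coincide.

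The genuine divergence is in the non-unital case, and here your proposal has a gap. The paper does \emph{not} stay in the special orthogonal normal form $(D,\vec\tau)$ and fight the Ruskai--Szarek--Werner quartics. Instead it passes to the Lorentz normal form of the Choi state: writing $R=\hat\mcE\,\Phi_T$ with $\Phi_T=\diag(1,1,-1,1)$, one has $\det R>0$ whenever $\det\hat\mcE<0$, and the decomposition $R=L_1^{\mathrm T}\tilde R L_2$ with proper orthochronous $L_{1,2}$ yields a \emph{diagonal} $\tilde R$ (for full Kraus rank), i.e.\ a Pauli channel $\mathcal G$ with $\det\hat{\mathcal G}<0$. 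The point is that $L_{1,2}$ act on the Choi state as one-way SLOCC, which cannot create entanglement; hence EB of $\mathcal G$ (your unital computation) pulls back to EB of $\mcE$. This single observation replaces both your route~(i) and route~(ii). Kraus-deficient channels are then handled by continuity of the concurrence, approximating from the full-rank interior.

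By contrast, your route~(ii) as written does not close: to conclude that a given non-unital $\mcD$ with $\det<0$ is EB, you would need every point on the connecting segment to be EB, but that is precisely the statement under proof; mixing an \emph{a priori} non-EB channel with an EB one says nothing about the former. Route~(i) may well go through, but you have not carried it out, and the Lorentz/SLOCC trick is exactly the conceptual shortcut that makes the algebra unnecessary.

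Finally, your closing example is off: $\diag(1,\tfrac45,\tfrac45,\tfrac35)$ has $\tfrac45+\tfrac45+\tfrac35=\tfrac{11}{5}>1$, so it lies outside the octahedron and is \emph{not} entanglement breaking (in the paper it appears as an example of $\pDiv\cap\Ind$, not of EB). The converse does fail, but pick something like the complete mixture contraction $\mcN$ instead.
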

\begin{proof}
Consider the \Jami{} state of a channel $\mcE$ written
in the factorized Pauli operator basis
$\tau_{\mcE}=\frac{1}{4}\sum^{3}_{jk}R_{jk}\sigma_j\otimes \sigma_k$~\cite{Verstraete2001},
and let $\hat \mcE$ be its representation in the Pauli
operator basis. Then the matrix identity $R=\hat\mcE\Phi_{\text{T}}$
with $\Phi_{\text{T}}=\diag \left( 1,1,-1,1 \right)$ holds.
Since $\det \hat \mcE <0$ it follows that $\det R=-\det\hat{\mcE}>0$.
Using the aforementioned Lorentz normal decomposition
$R=L^{\text{T}}_1 \tilde R L_2$ with $\det L_{1,2}>0$, and $\tilde R$
diagonal for $\mcE$ with full Kraus rank, see Ref.~\cite{Verstraete2001}. The transformations
$L_{1,2}$ correspond to \textit{one-way stochastic local operations
  and classical communications} (1wSLOCC) of $\tau_{\mcE}$, thus,
$\tilde R$ corresponds to an unnormalized two-qubit state with $\det \tilde R>0$.
The channel corresponding to $\tilde R$ (in the Pauli basis)
is $\hat {\mathcal G} = \tilde R \Phi_{\text{T}}/\tilde
R_{00}$. Since the latter is diagonal, then ${\mathcal G}$
is a Pauli channel with $\det\hat {\mathcal G}<0$. A Pauli channel
has a negative determinant if either all $\lambda_j$ are negative,
or exactly one of them is negative. In Ref.~\cite{Ziman2005} it has been
shown that the set of channels with $\lambda_j<0 \ \ \forall j$ are
entanglement breaking channels. Now, using the symmetries of the
tetrahedron, one can generate all channels with negative determinant
by concatenating this set with the Pauli rotations. Therefore
every Pauli channel with negative determinant is entanglement breaking, thus
$\tau_{\mathcal G}$ is separable. Given that 1wSLOCC operations
can not create entanglement~\cite{Horodecki}, we have that $\tau_\mcE$
is separable too. Therefore $\mcE$ is entanglement breaking.

The case when $\tilde R$ is non-diagonal corresponds to Kraus deficient channels (the matrix rank of \eref{eq:state_normal_form_singular} is at most $3$). This case can be analyzed as follows. 
Since the neighborhood of any Kraus deficient channel with negative determinant contains full Kraus rank channels, by continuity of the determinant such channels have negative determinant too. The last ones are entanglement breaking since full Kraus rank channels have diagonal Lorentz normal form. Therefore, by continuity of the concurrence~\cite{Ziman2005}, Kraus deficient channels with negative determinant are entanglement breaking.
\end{proof}

\begin{figure} 
\centering
\includegraphics{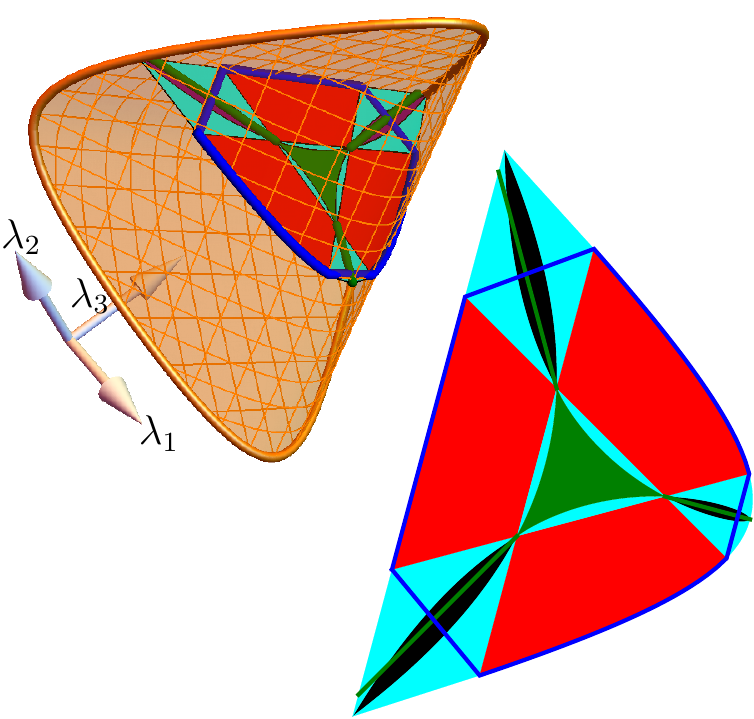}
\caption{(left) Set of non-unital unital channels up to unitaries, defined by
$\vec \tau=(1/2,0,0)$, see \eref{eq:orthogonalform}. This set lies inside the
tetrahedron. For this particular case the CP
conditions reduce to the two
inequalities $2 \pm 2\lambda_1 \ge \sqrt{1+4(\lambda_2 \pm \lambda_3)^2}$.
A cut corresponding to $\sum_i \lambda_i = 0.3$ is presented inside and in the right, see \fref{fig:cut1} for the color coding. The structure of divisibility sets presented here has basically the same structure as for the unital case except for \LDiv{}. A part of the channels with negative eigenvalues belonging to \LDiv{} lies outside $\cpDiv\setminus \LDiv$, see green lines. As for the unital case a central feature is that the channels in $\Div{}\setminus \pDiv{}$ are entanglement breaking channels. Channels in the boundary are not characterized due to the restricted character of Theorem~\ref{thm:Lorentz}.
 \label{fig:cutnonunital2}}
\end{figure} 
\section{Divisibility transitions and examples with dynamical process} 
\label{sec:jumps}
The aim of this section is to use illustrative examples of quantum dynamical
processes to show transitions between divisibility types of the instantaneous
channels. From the slices shown above (see figures \ref{fig:cut1} and
\ref{fig:cutnonunital2}) it can be noticed that every transition between the
studied divisibility types is permitted. This is due to the existence of common
borders between all combinations of divisibility sets; we can think of any
continuous line inside the tetrahedron~\cite{filippov2017} as describing some
quantum dynamical map. 

We analyze two examples, the first is an implementation of the approximate NOT gate, $\mcA_\text{NOT}$ throughout a specific collision model~\cite{Rybar2012}. 
The second is the well known setting of a two-level atom interacting with a quantized mode of an optical cavity \cite{Haroche06}. We define a simple function that assigns a particular value to a channel $\mcE_t$ according to divisibility hierarchy, i.e.
\begin{equation}
\DivFunc[\mcE]=\left\{
\begin{array}{cl}
1&\ {\rm if}\ \mcE\in{\sf C}^{\rm L}\,,\\
2/3&\ {\rm if}\ \mcE\in{\sf C}^{\rm CP}\setminus{\sf C}^{\rm L}\,,\\
1/3&\ {\rm if}\ \mcE\in{\sf C}^{\rm P}\setminus{\sf C}^{\rm CP}\,,\\
0&\ {\rm if}\ \mcE\in{\sf C}\setminus{\sf C}^{\rm P}\,.
\end{array}
\right.
\label{eq:delta_function}
\end{equation}
A similar function can be defined
to study the transition to/from the set of entanglement breaking channels, \ie{}
\begin{equation}
\chi[\mcE]=\left\{
\begin{array}{cl}
1&\ {\rm if}\ \mcE\text{ is entanglement breaking}\,,\\
0&\ {\rm if}\ \mcE\text{ if not}.
\end{array}
\right.
\label{eq:chi_function}
\end{equation}

The quantum NOT gate is defined as $\text{NOT}:\rho \mapsto \one-\rho$, \ie{}
it maps pure qubit states to its orthogonal state. Although this map transforms
the Bloch sphere into itself it is not a CPTP map, and the closest CPTP map
is $\mcA_{\rm NOT}:\rho \mapsto
(2\one-\rho)/3$. This is a rank-three qubit unital channel, thus,
it is indivisible \cite{cirac}. Moreover, $\det \mcA_{\rm NOT}=-1/27$ implies
that this channel is not achievable by a P-divisible dynamical map.  It is worth
noting that $\mcA_\text{NOT}$ belongs to \Ind{}.

A specific collision model was designed in Ref.~\cite{Rybar2012} simulating
stroboscopically a quantum dynamical map that implements the quantum NOT
gate $\mcA_{\rm NOT}$ in finite time.
The model reads
\begin{equation}
\mcE_t(\varrho)=\cos^2( t)\varrho+\sin^2( t)\mcA_{\rm NOT}(\varrho)
+\frac{1}{2}\sin(2 t)\mcF(\varrho)\,,
\label{eq:collision:model}
\end{equation}
where $\mcF(\varrho)=i\frac{1}{3}\sum_j [\sigma_j,\varrho]$.
This quantum dynamical map achieves the desired gate
$\mcA_{\rm NOT}$ at $t=\pi/2$.

Let us stress that this dynamical map is unital, i.e. $\mcE_t(\one)=\one$ for
all $t$, thus, its orthogonal normal form can be illustrated inside the
tetrahedron of Pauli channels, see  \fref{fig:tray}. In
\fref{fig:evolnot} we plot $\DivFunc{}[\mcE_t]$, $\chi[\mcE_t]$ and the
value of the $\det\mcE_t$.  We see the transitions ${\sf C}^{\rm L}
\rightarrow{\sf C}^{\rm P}\setminus{\sf C}^{\rm CP} \rightarrow{\sf C}^{\rm
div}\setminus{\sf C}^{\rm P} \rightarrow \Ind{}$ and back. Notice that
in both plots the trajectory never goes through the $\cpDiv{}\setminus \LDiv{}$
region. This means that when the parametrized channels up to rotations belong
to \LDiv{}, so do the original ones. The transition
between P-divisible and divisible channels, i.e. \pDiv{}$\setminus$\cpDiv{} and
\Div{}$\setminus$\pDiv{}, occurs at the discontinuity in the yellow curve in 
\fref{fig:tray}. Let us note that this discontinuity 
only occurs in the space of $\vec \lambda$; it is a consequence of the
orthogonal normal decomposition, see \eref{eq:orthogonalform}. The complete
channel is continuous in the full convex space of qubit CPTP maps.
 The transition from
$\pDiv{}\setminus \Div{}$ and back occurs at times $\pi/3$ and $2 \pi/3$.  It
can also be noted that the transition to entanglement breaking channels 
occurs shortly before the channel enters in the $\Div{} \setminus \pDiv{}$
region; likewise, the channel stops being entanglement breaking shortly after
it leaves the $\Div{} \setminus \pDiv{}$ region.

\begin{figure} 
\centering
\includegraphics[]{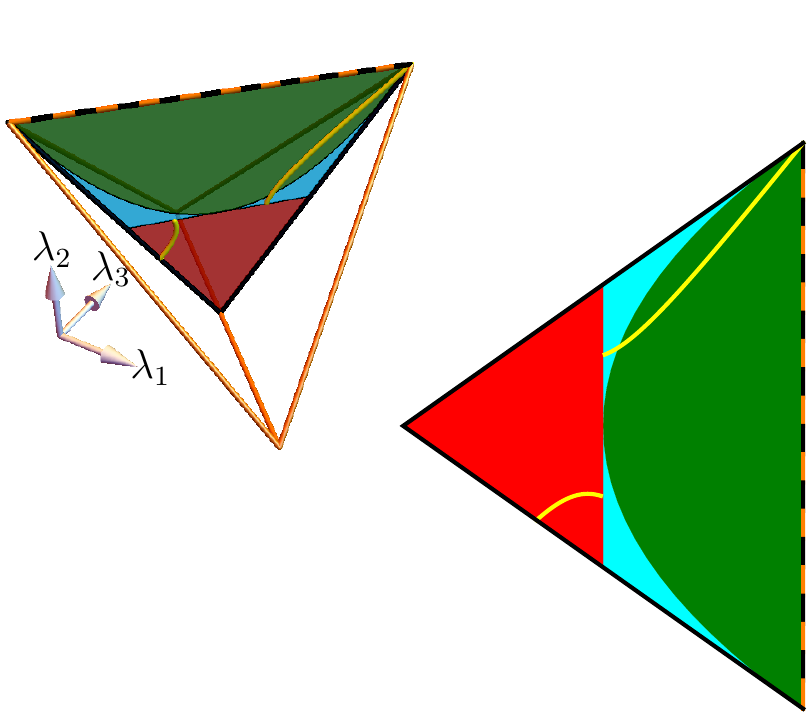}
\caption{(top left) Tetrahedron of Pauli channels with the trajectory, up to
rotations, of the quantum dynamical map~\eref{eq:collision:model} leading to
the $\mcA_\text{NOT}$ gate, as a yellow curve. (right) Cut along
the plane that contains the trajectory; there one can see the different regions
where the channel passes. For this
case, the characterization of the \LDiv{} of the channels induced gives the
same conclusions as for the corresponding Pauli channel, see
\eref{eq:orthogonalform}.
The discontinuity in the trajectory is due to the reduced representation of the
dynamical map, see \eref{eq:collision:model}; the trajectory is continuous
in the space of channels. 
See \fref{fig:cut1} for the color coding.
\label{fig:tray}}.
\end{figure} 

\begin{figure} 
\centering
\includegraphics[width=\columnwidth]{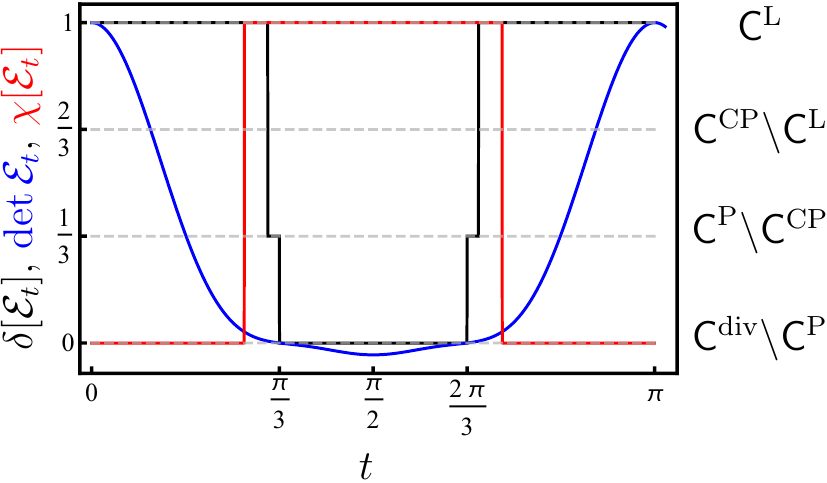}
\caption{
Evolution of divisibility, determinant, and entanglement breaking properties
of the map induced by \eref{eq:collision:model},
see \eref{eq:delta_function} and \eref{eq:chi_function}.
Notice that the channel $\mcA_\text{NOT}$, implemented at $t=\pi/2$, has minimum determinant. The horizontal gray dashed lines show the image
of the function $\delta$, with the divisibility types in the right side. It can
be seen that the dynamical map explores the divisibility sets as ${\sf C}^{\rm
L} \rightarrow{\sf C}^{\rm P}\setminus{\sf C}^{\rm CP} \rightarrow{\sf C}^{\rm
div}\setminus{\sf C}^{\rm P} \rightarrow \Ind{}$ and back. The channels are
entanglement breaking in the expected region.\label{fig:evolnot}}
\end{figure} 

Consider now the dynamical map induced by a two-level atom interacting with a mode of a
boson field. This model serves as a workhorse to explore a great variety of
phenomena in quantum optics~\cite{0953-4075-46-22-220201}. Using
the well known \textit{rotating wave approximation} one arrives to the
Jaynes-Cummings model~\cite{jaynescummings}, whose Hamiltonian is
\begin{equation}
H = \frac{\omega_a}{2}\sigma_z
    +\omega_f \left( a^{\dagger}a+\frac{1}{2} \right)
    + g\left( \sigma_- a^ {\dagger}+ \sigma_+ a \right).
\label{eq:Ham_qb}
\end{equation}
By initializing the environment in a coherent state $\ket{\alpha}$,
one gets the familiar \textit{collapse and revival} setting. Considering a particular set of parameters shown in~\fref{fig:revival1}, we
constructed the channels parametrized by time numerically, and studied their
divisibility and entanglement breaking properties. In the same figure we plot
functions $\DivFunc{}[\mcE_t]$ and $\chi[\mcE_t]$, together with the probability of
finding the atom in its excited state $p_e(t)$, to study and compare the
divisibility properties with the features of the collapses and revivals. The
probability $p_e(t)$ is calculated choosing the ground state of the free
Hamiltonian  $\omega_a /2 \sigma_z$ of the qubit, and it is given
by~\cite{klimovbook}:
\begin{equation}
p_e(t)=\frac{\langle\sigma_z(t)\rangle+1}{2},
\end{equation}
where 
\begin{equation*}
\langle \sigma_z(t)\rangle=-\sum_{n=0}^{\infty}P_n \left( \frac{\Delta^2}{4 \Omega_n^2}+\left(1-\frac{\Delta^2}{4 \Omega_n^2}\right)\cos \left(2 \Omega_n t \right)\right),
\end{equation*}
with $P_n=e^{-|\alpha|^2}|\alpha|^{2n}/n!$, $\Omega_n=\sqrt{\Delta^2/4+g^2n}$ and $\Delta=\omega_f-\omega_a$ the detuning.
\begin{figure} 
\centering
\includegraphics[width=\columnwidth]{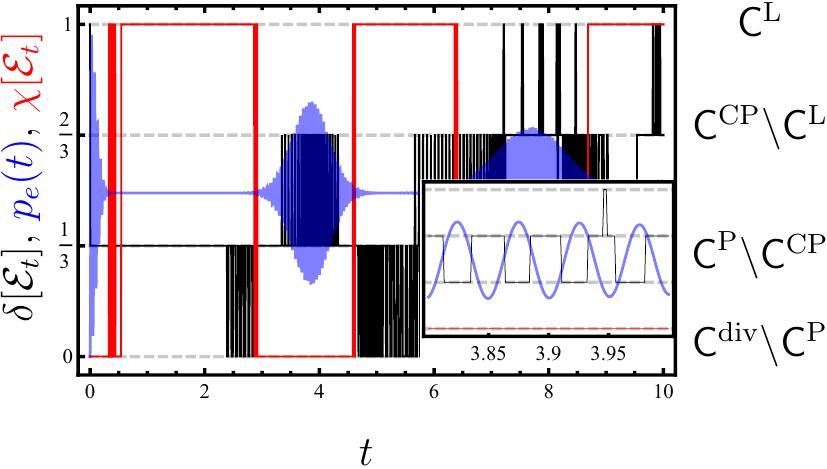}
%
\caption{Black and red curves show functions $\delta$ and $\chi$ of the
channels induced by the Jaynes-Cummings model over a two-level system,
see~\eref{eq:Ham_qb} with the environment initialized in a coherent state
$\ket{\alpha}$. The blue curve shows the probability of finding the
two-level atom in its excited state, $p_e(t)$.
The figure shows that the fast oscillations in $\delta$ occur
roughly at the same frequency as the ones of $p_e(t)$, see the inset. Notice
that there are fast transitions between $\pDiv{}\setminus\cpDiv{}$ and
$\cpDiv{}\setminus\LDiv{}$ occurring in the region of revivals, with a few
transitions between $\cpDiv{}\setminus\pDiv{}$ and $\LDiv{}$ in the second
revival. The function $\chi$ shows that during revivals channels are not
entanglement breaking, but we find that channels belonging to
$\Div{}\setminus\pDiv{}$ are always entanglement breaking, in agreement with
theorem~\ref{thm:EB}. The particular chosen set of parameters are
$\alpha=6$, $g=10$, $\omega_a=5$, and $\omega_f=20$.  \label{fig:revival1}}
\end{figure} 

The divisibility indicator function $\delta$ exhibits
an oscillating behavior, roughly at the same frequency of $p_e(t)$, see inset
in \fref{fig:revival1}.  The figure shows fast periodic transitions
between $\pDiv{}\setminus\cpDiv{}$ and $\cpDiv{}\setminus\LDiv{}$ occurring in
the region of revivals. There are also few transitions among
$\cpDiv{}\setminus\pDiv{}$ and $\LDiv{}$ in the second revival. Respect to the
entanglement breaking and the function $\chi$, there are no fast transitions in the former, and during revivals, channels are not entanglement breaking. We also observe that channels belonging to $\Div{}\setminus \pDiv{}$ are entanglement breaking, supporting theorem~\ref{thm:EB} for the non-unital case.
\section{Conclusions} 
\label{sec:conclusions}
We studied the relations between different types of divisibility of
time-discrete and time-continuous quantum processes, i.e. channels and
dynamical maps, respectively. In particular, we investigated
classes of channels by means of their achievability by dynamical maps
of different divisibility types, and also the divisibility of channels
occurring during the time evolutions. Apart from investigating the
relations between these concepts in general, we provided a detailed
analysis for the case of qubit channels. 

We implemented the known conditions to decide \LDiv{} for the general
diagonalizable case, and a discussion of the parametric space of Lindblad
generators was given (clarifying one of the results of the paper
\cite{Wolf2008}). For unital qubit channels it was shown that every
infinitesimal divisible map can be written as a concatenation of one \LDiv{}
channel and two unitary conjugations. For the particular case of Pauli channels
case, we have shown that the sets of infinitely divisible and L-divisible
channels coincide. We made an interesting observation, connecting the concept
of divisibility with the quantum information paradigm of entanglement breaking
channels. We found that divisible but not infinitesimal divisible qubit
channels, in PTP maps, are necessarily entanglement breaking. We also noted
that the intersection of indivisible and P-divisible channels is not empty.
This allows us to implement indivisible channels with infinitesimal PTP maps.
Finally, we questioned the existence of dynamical transitions between different
classes of divisibility channels. We argued that all the transitions are, in
principle, possible, and exploited two simple models of dynamical maps to
demonstrate these transitions. They clearly illustrate how the channels
evolutions change from being implementable by markovian dynamical maps to
non-markovian, and vice versa.

There are several directions how to proceed further in investigation of
divisibility of channels and dynamical maps. Apart from extension of this
analysis to larger-dimensional systems, a plethora of interesting questions are
related to design of efficient verification procedures of the divisibility
classes for channels and dynamical maps. 
In this paper we question the
divisibility features of snapshots of the evolution, however, it
might be of interest to understand when the time intervals of dynamical maps
implemented by non-markovian evolutions, can be simulated by markovian dynamical maps. 
 Also the area of channel divisibility contains several open
structural questions, e.g. the existence of at most $n$-divisible channels.
\section*{Acknowledgements} 
We acknowledge Thomas Gorin and Tom\'a\v{s} Ryb\'ar for useful discussions, as
well PAEP and RedTC for financial support. Support by projects CONACyT 285754,
UNAM-PAPIIT IG100518, IN-107414, APVV-14-0878 (QETWORK) is acknowledged. CP
acknowledges support by PASPA program from DGAPA-UNAM. MZ acknowledges the support of VEGA
2/0173/17 (MAXAP), GA\v CR project no.
GA16-22211S and MUNI/G/1211/2017 (GRUPIK).
\appendix
\section{On Lorentz normal forms of Choi-Jamiolkowski state} 
\label{sec:appendix}
In this appendix we compute the Lorentz normal decomposition of a channel for
which one gets $b\neq 0$, supporting our observation that Lorentz normal
decomposition does not take \Jami{} states to something proportional to a \Jami{}
state. Consider the following Kraus rank three channel and its
$R_{\mcE}$ matrix, both written in the Pauli basis:
\begin{equation}
\hat \mcE=\left(
\begin{array}{cccc}
 1 & 0 & 0 & 0 \\
 0 & -\frac{1}{3} & 0 & 0 \\
 0 & 0 & -\frac{1}{3} & 0 \\
 \frac{2}{3} & 0 & 0 & \frac{1}{3} \\
\end{array}
\right),
\end{equation}
and 
\begin{equation}
R_{\mcE}=\left(
\begin{array}{cccc}
 1 & 0 & 0 & 0 \\
 0 & -\frac{1}{3} & 0 & 0 \\
 0 & 0 & \frac{1}{3} & 0 \\
 \frac{2}{3} & 0 & 0 & \frac{1}{3} \\
\end{array}
\right).
\end{equation}
Using the algorithm introduced in Ref.~\cite{Verstraete2001} to 
calculate $R_{\mcE}$'s  Lorentz
decomposition into orthochronous proper Lorentz transformations
we obtain
\begin{align}
L_1 &= 
\frac{1}{\gamma_1}
\begin{pmatrix}
 4 & 0 & 0 & 1 \\
 0 & -\gamma_1 & 0 & 0 \\
 0 & 0 & -\gamma_1 & 0 \\
 1 & 0 & 0 & 4
\end{pmatrix}, 
\end{align}
\begin{align*}
L_2 & = 
\frac{1}{\gamma_2}
\begin{pmatrix}
 89+9\sqrt{97} & 0 & 0 & -8 \\
 0 & -\gamma_2 & 0 & 0 \\
 0 & 0 & -\gamma_2 & 0 \\
 -8 & 0 & 0 & 89+9\sqrt{97}
\end{pmatrix},
\end{align*}
and
\begin{align*}
\Sigma_\mcE&=
\frac{1}{\gamma_3}
\begin{pmatrix}
 \sqrt{11+\frac{109}{\sqrt{97}} } & 0 & 0 & -\frac{\sqrt{97}+1}{\sqrt{ 89 \sqrt{97}+873}} \\
 0 & -\frac{\gamma_3}{3} & 0 & 0 \\
 0 & 0 & \frac{\gamma_3}{3} & 0 \\
 \sqrt{1+\frac{49}{\sqrt{97} }} & 0 & 0 & \sqrt{-1+\frac{49}{\sqrt{97} }} \\
\end{pmatrix}
\end{align*}
with $\gamma_1=\sqrt{15}$, $\gamma_2=3\sqrt{178 \sqrt{97}+1746}$, and 
$\gamma_3=\sqrt{30}$.
Although the central matrix $\Sigma_\mcE$ is not exactly of the form
\eref{eq:state_normal_form_singular}, it is equivalent. To see this notice
that the derivation of the theorem 2 in~\cite{Verstraete2001} considers
only decompositions into proper orthochronous Lorentz transformations. But 
to obtain the desired form, the authors change signs until they get
\eref{eq:state_normal_form_singular}; this cannot be done without changing
Lorentz transformations. If we relax the condition over $L_{1,2}$ of being proper
and orthochronous, we can bring $\Sigma_\mcE$ to the desired form by
conjugating $\Sigma_\mcE$ with $G=\diag\left(1,1,1,-1 \right)$:
\begin{multline*}
G^{-1} \Sigma_\mcE G= \\
\frac{1}{\gamma_3}
\begin{pmatrix}
 \sqrt{11+\frac{109}{\sqrt{97} }} & 0 & 0 & \frac{\sqrt{97}+1}{\sqrt{ 89 \sqrt{97}+873}} \\
 0 & -\frac{\gamma_3}{3} & 0 & 0 \\
 0 & 0 & \frac{\gamma_3}{3} & 0 \\
- \sqrt{1+\frac{49}{\sqrt{97} }} & 0 & 0 & \sqrt{-1+\frac{49}{\sqrt{97} }} \\
\end{pmatrix}.
\end{multline*}
In both cases (taking $\Sigma_\mcE$ or $G^{-1} \Sigma_\mcE G$ as the normal form of $R_\mcE$), the corresponding channel is not proportional to a trace-preserving
one since $b\neq 0$, see \eref{eq:state_normal_form_singular}.
This completes the counterexample.
\bibliographystyle{plainnat}
\bibliography{labibliografia}

\end{document}